\renewcommand\footnotetextcopyrightpermission[1]{} 
\newif\ifcomments
\newtheorem{theorem}{Theorem}[section]
\newtheorem{lemma}[theorem]{Lemma}
\newtheorem{definition}[theorem]{Definition}
\newtheorem{conjecture}{Conjecture}
\def\Q{\mathcal{Q}}
\newcommand{\refthm}[1]{Theorem~\ref{thm:#1}}
\newcommand{\refsec}[1]{Section~\ref{sec:#1}}
\newcommand{\BO}[1]{{ O}\left(#1\right)}
\newcommand{\BTO}[1]{\tilde{ O}\left(#1\right)}
\newcommand{\BT}[1]{{\Theta}\left(#1\right)}
\newcommand{\BOM}[1]{\Omega\left(#1\right)}
\newcommand{\BTM}[1]{\tilde{\Omega}\left(#1\right)}
\newcommand{\E}{\mathbb{E}}
\title{How many users have been here for a long time? Efficient solutions for counting long aggregated visits}
\author{Peyman Afshani}
\affiliation{%
  \institution{Aarhus University}
  \country{Denmark}
}
\author{Rezaul Chowdhury}
\affiliation{%
  \institution{Stony Brook University}
  \country{USA}
}
\author{Inge Li Gørtz}
\affiliation{%
  \institution{Technical University of Denmark}
  \country{Denmark}
}
\author{Mayank Goswami}
\affiliation{%
  \institution{City University of New York, Queens College}
  \country{USA}
}
\author{Francesco Silvestri}
\affiliation{%
  \institution{University of Padova}
  \country{Italy}
}
\author{Mariafiore Tognon}
\affiliation{%
  \institution{University of Padova}
  \country{Italy}
}
\begin{abstract}
This paper addresses the \emph{Counting Long Aggregated Visits} problem, which is defined as follows. We are given $n$ users and $m$ regions, where each user spends some time visiting some regions. For a parameter $k$ and a query consisting of a subset of $r$ regions, the task is to count the number of distinct users whose aggregate time spent visiting the query regions is at least $k$. This problem is motivated by queries arising in the analysis of large-scale mobility datasets.

We present several exact and approximate data structures for supporting counting long aggregated visits, as well as conditional and unconditional lower bounds. First, we describe an exact data structure that exhibits a space-time tradeoff, as well as efficient approximate solutions based on sampling and sketching techniques. We then study the problem in geometric settings where regions are points in $\mathbb{R}^d$ and queries are hyperrectangles, and derive exact data structures that achieve improved performance in these structured spaces.
\end{abstract}
\keywords{Exact and approximate counting, sketching, geometric algorithms, lower bounds}
\begin{document}
\maketitle
\thispagestyle{plain}
\pagestyle{plain}
\makeatletter
\def\ps@plain{%
  \let\@oddhead\@empty
  \let\@evenhead\@empty
  \def\@oddfoot{\hfil\thepage\hfil}
  \def\@evenfoot{\hfil\thepage\hfil}
}
\makeatother

\section{Introduction}
A common data type in large-scale mobility datasets consists of geo-located data with time information \cite{Pelekis14}.
For the sake of simplicity, we can view these datasets as consisting of triplets, where each triplet $(u,r,t)$ denotes that a given user $u$ has spent some time $t$ in a given region $r$.
According to the different technologies used to collect data, the datasets might differ in how a region is denoted.
One common approach leverages mobile network logs (i.e., CDR), which record the cell towers used by each smartphone at a given time. 
These datasets provide broad population coverage; however, they cannot pinpoint a user’s exact location and instead only indicate that the user was somewhere within the coverage area of a given cell tower. 
One alternative approach relies on applications that periodically record GPS coordinates together with timestamps, such as navigation apps. 
Although these data might not provide the same statistical guarantees as network logs, they provide precise GPS coordinates.
Mobility datasets are typically very large: for instance, the telecommunications company Vodafone collects approximately 30 billion records per day just from users in Italy \cite{Vodafone23}, while TomTom collects 61 billion GPS data points per day from their navigation devices \cite{TomTom_MapUpdate_2019}.

These mobility datasets are used to extract information on human behaviors that are of interest, for instance, for urban development, economics, and tourism.
Common queries on these datasets are, for example: "\textit{How many commuters have spent more than 1 hour stuck in the highway?}", "\textit{How many unique individuals have visited the city center for more than an hour?}", "\textit{How many unique tourists have visited Venice and Florence for more than 2 days?}" (e.g. \cite{Louail2014CongestedTravel,Montoliu2013,Cavallo2022Exploring}).
    As users move, the queries require recognizing the same user in different regions and computing the overall time in the desired area.
Consider the input triplets $(u_1,r_1, 20 h)$, $(u_2,r_1, 15 h)$, $(u_1,r_2, 15 h)$, $(u_3,r_2,30 h)$, $(u_2,r_3,20 h)$ and the area of interest defined by the regions $\{r_1, r_2\}$:
then, the two users $u_1, u_3$ have spent at least 30h in the area, while $u_2$ does not reach the required minimum time.
The area specified by these questions can be translated into a list of cell towers that cover the area of interest for network logs, or as bounding boxes for GPS coordinates. 
Moreover, for privacy reasons, these queries do not ask to list users, but rather to provide counts.
Note that similar queries arise in different scenarios, such as when evaluating user engagement on websites (e.g. \cite{8609680}).

In this paper, we provide formal definitions of the above problem, which we name \emph{Counting Long Aggregated Visits} or just  \emph{CLAV} for short.
We first provide a generic definition that does not make any assumptions about region properties and can also be used for non-geometric objects (e.g., visits to a webpage).
Then we define \emph{Geometric-CLAV}, a variant in which users and regions belong to the Euclidean space \emph{$\mathbb{R}^d$}.
We provide efficient approaches to compute exact and approximate solutions to these problems, as well as unconditional and conditional lower bounds. 
We now describe the CLAV and Geometric-CLAV problems and our results, which are also summarized in Table \ref{tab:results}.

\begin{table}
\footnotesize
\begin{tabularx}{\linewidth}{|m{11em}|m{10em}|m{5.5em}|X|m{4em}|}
\hline
 \textbf{Type of CLAV} & \textbf{Space} &\textbf{Query time} & \textbf{Guarantees} & \textbf{Theorem}\\
\hline\hline
$(k,r)$-CLAV, exact solution  
    & $S$ 
    & $\BTM{N/S^{1/r}}$ 
    & Conditional lower bound 
    & \ref{thm:lowergeneral} \\
\hline
$(k,r)$-CLAV, exact solution
    & $\BO{ (N/\lambda)^{r}n+N}$ 
    & $\BTO{r \lambda}$ 
    & For any $\lambda>0$ 
    & \ref{thm:exactalg}\\
\hline
$(k,r)$-CLAV, approximate solution, sampling based  
    & $\BO{N}$ 
    & $\BO{\frac{r^3}{\varepsilon^2}\log n}$
    &  ${\varepsilon}n_Q$ additive error with high probability, $\varepsilon >0$  
    &  \ref{thm: nqk_estimate}\\
\hline
$(k,r)$-CLAV, approximate solution, sketch based
    & $\BO{m \varepsilon^{-1} \log^2 n \log r}$ bits
    & $\BO{\frac{1}{\varepsilon} \log^2 n }$ 
    & $\BT{n_{Q,k}+\varepsilon n_Q}$ additive error with high probability, $\varepsilon = \BOM{{n_{Q,k}}/{n_Q}}$; it might count users with aggregate time $\geq k(1-1/r)$  
    & \ref{thm:sketch}\\
\hline
Geometric CLAV, exact 
    & $\BOM{\min\{n,\left(\frac{m}{2d}\right)^{2d}\}}$ bits  
    & - & Unconditional lower bound for any query time
    & \ref{thm:geolbuncon}\\
\hline
Geometric CLAV, exact
    &  $ \Omega( m^{2d-1-o(1)})$
    &  $\BTO{1}$
    & Conditional lower bound 
    & \ref{thm:condlb1}\\
    \hline
Geometric CLAV, exact, $d=2$
    &  $\Omega(m^{3-\varepsilon})$ preprocessing time
    &  $O(m^{1-\varepsilon})$
    & Conditional lower bound for any $\varepsilon>0$
    & \ref{thm:condlb2}\\
    \hline
Geometric CLAV, exact
    &  $\BO{\min\{Nm^{2d-2},m^{2d}\}}$ 
    &  $O(\log_{w} n_{Q,k})$
    & For points in $\mathbb{R}^d$
    & \ref{thm:geometric}\\
    \hline
\end{tabularx}
\caption{\label{tab:results} Our main results are summarized here. Bounds are provided in a simplified version for readability. Space is in words, unless differently stated. Recall that $n$ is the number of users, $m$ is the number of regions, $N=|T|$, $n_Q$ is the number of distinct users in the query region $Q$ and $n_{Q,k}$ is the number of distinct users in the query region $Q$ with aggregated time $\geq k$ (the required output), and $d=\BO{1}$  is the dimension of the points in the geometric version. $w$ is the word-size, which is at least $\log(n+m)$.
The $\BTO\cdot$ and $\tilde \Omega(\cdot)$ notations hide polylogarithmic factors.}
\end{table}

\subsection{$(k,r)$-CLAV}\label{sec:krclav}
The first definition of the counting long aggregated visits problem for general objects is as follows:
\begin{definition}\label{def:general}
Let $U=\{u_0, \ldots u_{n-1}\}$ be a set of $n$ distinct users, $R=\{r_0, \ldots r_{m-1}\}$ be a set of $m$ regions, and let $T=\{(i,j,\tau_{i,j}), \text{with } 0\leq i <n, 0\leq j<m,  \tau_{i,j}>0\}$ be a set of $N$ triplets where each triplet $(i,j,\tau_{i,j})$ denotes that user $u_i$ has spent time $\tau_{i,j}$ in region $r_j$; we assume $\tau_{i,j}=0$ if there are no triplets $(i,j,\cdot)\in T$.
Let $k>0$ and  $1\leq r \leq m$ be two given parameters.
The \emph{$(k,r)$-CLAV} requires constructing a data structure, given the $U, R, T$ as input, that returns for any query $Q\subseteq R$ with $|Q|=r$ the number $n_{Q,k}$ of users who have spent no less than a total time $k$ in regions denoted by $Q$. 
That is:
\[
n_{Q,k} = \left\vert \left\{u_i\in U: \sum_{j\in Q} \tau_{i,j} \geq k\right\} \right\vert.
\]
\end{definition} 

Without loss of generality, we assume that there are no two triplets $(i,j,k_1), (i,j,k_2)$ in $T$ with the same $i$ and $j$ values; this property easily follows by replacing the triplets with the $(i,j,k_1+k_2)$.
We use the following notation: $N$ is the number of triplets in $T$ (i.e., $N=\|T\|$); $R_j$ denotes the set of users $u_i$ with $\tau_{i,j}>0$ (i.e. $R_j=\{i: \tau_{i,j}>0)\}$) with $j\in [r]$;\footnote{We define $[n]=\{0, 1,\ldots n-1\}$.} $N_Q$ is the number of triplets in the query regions $Q$ (i.e., $N_Q= \sum_{j\in Q} |R_j|$); $n_Q$ is the number of distinct users in the query regions $Q$. Finally, space requirements are in memory words unless differently specified; we assume each memory word to contain $w=\BOM{\log (n+m)}$ bits.

It is easy to see that a simple exact solution to a query $Q$ requires $\BO{N_Q}$ query time by scanning the users in the $r$ sets in $Q$.
On the other hand, we can achieve a $\BO{1}$ query time if all 
the possible queries are precomputed with space requirements exponential in $r$.
In Section \ref{sec:exactupper} and Section \ref{sec:lb}, we complete the tradeoff and derive a tight lower bound by leveraging results on set intersection (see e.g. \cite{goldstein2017conditional}).
Unlike set intersection, the $(k,r)$-CLAV does not require all users to belong to all regions in a query; for instance, given a query $Q$, it suffices for a user $u_i$ to appear with time $k/2$ in two regions $r_1, r_2\in Q$ to be counted for any $r\geq 2$.

In Section \ref{sec:approxsampling}, we analyze approximate solutions. 
We first provide a linear space solution with $\BO{{r^3}{\varepsilon^{-2}}\log (n)}$ query time for any $\varepsilon>0$: it provides an unbiased estimator of $n_{Q,k}$ and, with high probability,
it achieves a maximum additive error of $\BO{\varepsilon n_Q}$.
We then provide a compressed data structure that uses a sketch based on the Flajolet-Martin and Count-Min sketches to summarize each region.
The data structure needs $\BO{m \varepsilon^{-1} \log^2 n \log r}$ bits and  $\BO{r\varepsilon^{-1}  \log^2 n }$  query time.
The improvement in space and query time comes with a slightly lower quality of the results: the estimator might include users with aggregate time $\geq k(1-1/r)$, and the maximum error is $\BT{n_{Q,k}+\varepsilon n_Q}$ with high probability for $\varepsilon =\BOM{{n_{Q,k}}/{n_Q}}$; moreover, if a rough estimate of $n_{Q,k}$ is known, we obtain a 3-approximation with $\varepsilon =\BT{{n_{Q,k}}/{n_Q}}$.

We observe that a simple sketch can also be derived from the results of detecting heavy hitters in streams, as users that appear $k$ times in the $Q$ regions are users that appear with a frequency of $\phi \tilde T_Q$, with $\phi= k/{\tilde{T}_Q}$ and $\tilde T_Q = \sum_{(i,j,\tau_{i,j})\in T, j\in Q} \tau_{i,j}$.
However, this would provide a sketch where the space requirements depend on the aggregated times of all users in the query region; on the contrary, our solution is independent of $\tilde T_Q$.

\subsection{Geometric-CLAV}
We now present the definition for the geometric case. 
We represent each region by a point in $\mathbb{R}^d$, for instance, a Point-Of-Interest (POI) in a city; the query is represented by a hyperrectangle.

\begin{definition}[Geometric-CLAV]\label{def:geometric}
Let $U=\{u_0, \ldots u_{n-1}\}$ be a set of $n$ distinct users, let
$R=\{x \in \mathbb{R}^d\}$ be a set of $m$ points in $\mathbb{R}^d$, 
and let $T=\{(i, x, \tau_{i,x}) \text{with } \tau_{i,x} > 0,  0\leq i <n, x\in R\}$ 
be a set of $N$ triplets where each triplet $(i,x,\tau_{i,x})$ 
denotes that user $u_i$ has spent time $\tau_{i,x}$ in point $x$; we assume $\tau_{i,x}=0$ if there are no triplets $(i,x,\cdot)\in T$.
Given $U,R,T$, the \emph{Geometric-CLAV} requires constructing a data structure that, given a query $Q$ denoted by a hyperrectangle in $\mathbb{R}^d$, returns the number $n_Q$ of users that have spent no less than $k$ in query $Q$, that is:
\[
n_{Q,k} = \left\vert \left\{u_i\in U: \sum_{x\in Q, \tau_{i,x} \in T }    \tau_{i,x} \geq k\right\} \right\vert.
\]
\end{definition}

We now describe the results for this version.
We use similar notation as for the $(k,r)$-CLAV problem
and, for notational simplicity, we assume $d$ to be a constant.

We first provide, in Section \ref{sec:geolbuncon}, an unconditional space lower bound of $\min\{n,(m/2d)^{2d}\}$ bits, regardless of the query time.
When $n > (m/2d)^{2d} $, the lower bound is matched by a solution with a look-up table using tabulation (Theorem~\ref{thm:tabulation}) up to polylog terms.

Then, in Sections \ref{sec:condlb1} and \ref{sec:condlb2}, we prove two stronger conditional lower bounds. 
The first bound is an $\Omega(m^{2d-1-o(1)})$ space lower bound on any data structure with polylogarithmic query time, assuming a fast-query version of the $r$-SetDisjointness Conjecture in \cite{goldstein2017conditional}.
For a planar case, we prove an alternative conditional lower bound: if $n=m$ and $N=O(m^2)$, any data structure must either have $\Omega(m^{1-\varepsilon})$ query time or $\Omega(m^{3-\varepsilon})$ preprocessing, for any $\varepsilon>0$, assuming the Boolean-Matrix-Multiplication conjecture~\cite{subcubic,bansal2009regularity}.

Finally, in Section \ref{sec:ubgeometric}, we provide a data structure that uses a reduction to the colored dominance counting problem \cite{gao2023adaptive}.
The data structure requires $\BO{\min\{Nm^{2d-2},m^{2d}\}}$ memory words and $O(\log_{w} n_{Q,k})$ query time.
This data structure almost matches the conditional lower bound obtained in \refsec{condlb1} (up to $n^{o(1)}$ factors).

\subsection{Previous works}
Research related to our problem spans several different domains: indeed, similar tasks have been tackled as \textit{Spatiotemporal Queries}, \textit{Heavy Hitters}, or \textit{Distinct Counting}, both from an exact and approximate standpoint.
In what follows, we aim to outline the differences between our problem and existing formulations in the literature.

A large body of work is devoted to solving spatiotemporal aggregation queries: the typically posed problems consist of \emph{timestamp} or \emph{interval} queries, which ask to report, count, or compute aggregates over objects that appear within a specific (rectangular) spatial region during a specified time window. This is essentially equivalent to focusing on all users that appear in an area for more than $k$ time with $k=0$.
Examples of reporting tasks include \cite{tao2001mv3r}, \cite{romero2012smo},\cite{gutierrez2005spatio}.  Counting variants are also common: for instance, \cite{tao2004spatio} tackles the problem of counting users that satisfy a specific spatiotemporal query, but other variations involve computing the average number of visitors per timestamp \cite{tao2002aggregate} or reporting the top-$k$ temporal intervals per cardinality \cite{xu2023topkaggregate}. Additional works focus instead on the computation of feature-based aggregates of users that appear in a low-dimensional rectangular query, which identifies both a time interval and a geographic area \cite{timko2009sequenced}.
Note that this kind of query is fundamentally different from ours, as no non-trivial requirement on the users' dwell-time is imposed.
As a consequence, since the typical input data for interval queries consists of snapshots capturing the presence of a user in a specific location at a specific time, the employed data structures do not easily adapt to our case for general $k$.

Our task involves counting distinct users who appear in a union of regions $Q$, which becomes known only at query time. This suggests the use of sketch-based approximations to speed up the task and reduce memory requirements.
When it comes to counting, several sketches for cardinality estimation have been thoroughly studied in the last decades (\cite{flajolet1985probabilistic,whang1990linear,baryossef2002countdistinct,heule2013hyperloglog,martingale,ertl2023ultraloglog}), achieving optimal space-accuracy tradeoffs (\cite{woodruff2010optimal}) and enjoying mergeability properties (i.e. allowing the sketch of a union to be obtained by combining sketches of individual sets). 
However, to the best of our knowledge, none of these works allows filtering out users that do not match a \textit{cross-regional} predicate while simultaneously reporting a cardinality estimate of the target elements. 

Similarly, even existing sketches for heavy-hitter reporting do not seem to fit. Popular Algorithms for Heavy Hitters, such as Misra-Gries (\cite{misra1982finding,demaine2002frequency,zhang2013distributed}), Sticky Sampling (\cite{manku2002sticky}) or the optimal reporting algorithm in \cite{yi2009optimaldistributed}, cannot be trivially extended to our case, as their naive readaptation leads to sketches whose space depends on the specific values of stay time in $T$, as noted at the end of Section \ref{sec:krclav}. 
As far as CountSketch and its variants (\cite{alon1996space,charikar2002finding,cormode2005improved}) are concerned, they allow recovering the (approximate) frequency of the heaviest items only by explicitly querying their keys.

Finally, existing works on monitoring thresholded data focus either on \textit{identifying or monitoring} all keys that exceed a certain threshold without simultaneously allowing for multiple different queries (\cite{li2020distributedthreshold,keralapura2006communication}); we instead aim to design simpler data structures that support counting and multiple user-defined queries, which can therefore benefit both query runtime and space complexity with respect to the blind application of existing solutions. 

As for the Geometric-CLAV, when $k=1$ and when $\tau_{i,x}$ is either 0 or 1, the problem is known as ``generalized intersection searching'' (or colored range searching)~\cite{janardan1993generalized,gupta1995further} and
it has received considerable attention since~\cite{gupta2018survey}. In the variant most relevant to us, the goal is to build a data structure for an input point set where each point is assigned a color, and given a query rectangle, the goal is to 
count the number of distinct colors in the query. The problem can be efficiently solved in 1D~\cite{janardan1993generalized,gao2021twod,gao2023adaptive}, but it is known to be difficult in dimensions 2 and above~\cite{kaplan2008efficient}.
Interestingly, our conditional lower bounds are higher than those obtained in~\cite{kaplan2008efficient}, which shows that our problem is strictly more difficult.

\section{Exact solutions to the $(k,r)$-CLAV problem}

Here, we derive almost matching upper and conditional lower bounds on the general (i.e., not geometric) $(k,r)$-CLAV problem (Definition~\ref{def:general}). We start with the lower bound.

\subsection{Lower bounds for the $(k,r)$-CLAV problem}
\label{sec:lb}

Let $S_0,\cdots,S_{m-1}$ be subsets of a universe $U$ of total size $N=\sum_{i=1}^{m} |S_{i}|$. The goal of the $r$-Set Disjointness problem is to preprocess this collection of subsets in order to answer queries of the form: given $Q \subseteq [m]$ such that $|Q|=r$, decide if $\cap_{q \in Q} S_{q}$ is empty, i.e., whether the subsets $S_{q}$ for $q \in Q$ are disjoint. For $r=2$, this is simply pairwise set intersection, which is a well-known problem with its own conjectures, the SetDisjointness and the Strong SetDisjointness conjectures (Conjectures 1 and 2 in \cite{goldstein2019hardness}). 
Our source of hardness is the Strong $r$-Set Disjointness Conjecture (Conjecture 9 in \cite{goldstein2017conditional}, who also provide a matching upper bound). 

\begin{conjecture}[\cite{goldstein2017conditional}] Any data structure for the $r$-SetDisjointness Problem that uses $S$ space and has query time $T$ must satisfy $ST^{r}=\widetilde{\Omega}(N^{r})$.
\end{conjecture}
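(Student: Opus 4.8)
Because the final statement is a conjecture — a fine-grained data-structure hardness assumption rather than a theorem with a known unconditional proof — a proof plan naturally splits into two halves: establishing that the claimed tradeoff is tight from above (which is provable) and laying out the most promising route toward the matching lower bound, while being explicit about the barrier that keeps the statement conjectural. I would present both, since the tight upper bound is exactly what makes the conjectured exponent on $T$ the right target to aim at.

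For the upper-bound half, I would fix a threshold $\theta$ and partition the input sets by size: call $S_i$ \emph{heavy} if $|S_i| > \theta$, so there are at most $N/\theta$ heavy sets. For every $r$-subset of heavy sets I would precompute a single bit recording whether their intersection is empty, using $\BO{(N/\theta)^r}$ space. A query $Q$ is then answered as follows: if all $r$ sets of $Q$ are heavy, return the stored bit in $\BO{1}$ time; otherwise pick a light set of $Q$, scan its $\le \theta$ elements, and test each for membership in the other $r-1$ sets via hash tables, returning ``not disjoint'' iff some element lies in all of them. Balancing $\theta$ gives $S=\BTO{(N/\theta)^r}$ and $T=\BTO{r\theta}$, hence (for constant $r$) $S T^r = \BTO{N^r}$; this is the matching upper bound alluded to just after the conjecture, and it certifies that no strictly larger lower bound can hold.

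For the lower-bound half, the natural route is the asymmetric (lopsided) communication / cell-probe framework. In the cell-probe model with word size $w=\BT{\log N}$, $S$ cells, and query time $T$, the standard Miltersen--Nisan--Safra--Wigderson simulation turns a correct data structure into a two-party protocol in which Alice holds the query (the $r$ indices) and Bob holds $S_0,\dots,S_{m-1}$, with the $T$ probes costing $\BTO{T\log S}$ bits from Alice and $\BTO{Tw}$ from Bob. It would then suffice to prove a communication lower bound for $r$-Set Disjointness in this lopsided regime. To manufacture the $r$-dependence I would use a tensored hard distribution: build the universe and the sets as an $r$-fold product of a two-set base gadget, so that certifying joint disjointness forces the protocol to resolve each of the $r$ coordinates, and then push a round-elimination / richness argument through the product to accumulate an $N^r$-type bound against $S$.

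The hard part — and precisely the reason this is stated as a conjecture and not a theorem — is this last communication lower bound. The known tools (the richness lemma, round elimination, and existing lopsided set-disjointness bounds) are not strong enough to force a genuinely polynomial tradeoff $S T^r = \BTM{N^r}$ across the full range of $S$: unconditional static cell-probe lower bounds are known to degrade once the space becomes superlinear, which is the standard barrier in static data-structure lower bounds. The realistic plan therefore reduces the general statement to its base case $r=2$ (the SetDisjointness conjecture) through the product construction above, and supports that base case with the partial unconditional bounds that are provable together with reductions from other standard hardness assumptions such as Boolean matrix multiplication and OMv — while ultimately treating the full polynomial tradeoff as the assumption that the rest of the paper builds on.
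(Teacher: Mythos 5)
The statement you were asked about is a conjecture imported from \cite{goldstein2017conditional}, and you correctly treat it the same way the paper does: as an assumption that cannot currently be proved, accompanied by the matching upper bound — your heavy/light partition with precomputed bits for $r$-tuples of heavy sets and hash-table membership scans for light sets is exactly the large/small dichotomy of \cite{goldstein2019hardness} that the paper recounts (including the extra factor $r$ in query time for $r>2$ that motivates the paper's Conjecture 2). Your lopsided cell-probe/communication sketch for the lower-bound half is honestly flagged as beyond current techniques, which is precisely why both you and the paper leave the tradeoff $ST^{r}=\widetilde{\Omega}(N^{r})$ as a conjecture rather than a theorem.
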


We would like to point out that, although not stated in \cite{goldstein2017conditional}, the above conjecture can only hold for large enough universe size as there is a trivial solution (using $N$ space, just storing the input) running in time $T=r|U|$, and if $|U| \ll N$ the conjecture is clearly broken. In Section~\ref{sec:geometric} we present a version of the above conjecture when the universe size is at least polylogarithmic in $m$, and the query time is at most polylogarithmic in $m$, that we call the Polylog-UniQuery-$\beta$-SetDisjointness Conjecture (Conjecture 3 in Section \ref{sec:condlb1}).

\begin{theorem}\label{thm:lowergeneral}
    Assume Conjecture 1 holds. Then any data structure for the $(k,r)$-CLAV problem that uses $S$ space and has query time $T$ must satisfy $ST^{r}=\widetilde{\Omega}(N^{r})$.
\end{theorem}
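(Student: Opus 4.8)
The plan is to prove the lower bound by a reduction from the $r$-Set Disjointness problem to $(k,r)$-CLAV, so that any efficient data structure for the latter yields one for the former, after which the claimed tradeoff follows immediately from Conjecture 1. Crucially, I want the reduction to preserve the total input size $N$ exactly, since the conjectured bound $ST^{r}=\widetilde{\Omega}(N^{r})$ is phrased in terms of $N$.

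Concretely, given an instance $S_0,\ldots,S_{m-1}\subseteq U$ of $r$-Set Disjointness with $N=\sum_i |S_i|$, I build a CLAV instance as follows. The universe $U$ becomes the user set (so $n=|U|$), the $m$ subsets become the $m$ regions, and for every pair $(i,j)$ with $i\in S_j$ I create a single triplet $(i,j,k/r)$, assigning the uniform time $\tau_{i,j}=k/r$. The number of triplets produced is exactly $\sum_j |S_j|=N$, so the parameter $N$ is identical in both problems, and the construction is clearly computable in linear time.

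The correctness of the reduction is the heart of the argument. For a query $Q\subseteq[m]$ with $|Q|=r$, a user $i$ has aggregate time $\sum_{j\in Q}\tau_{i,j}=(k/r)\,\big|\{j\in Q : i\in S_j\}\big|$. Since each term is either $0$ or $k/r$, this sum reaches the threshold $k$ if and only if $i$ belongs to all $r$ sets $S_j$ with $j\in Q$; hence
\[
n_{Q,k}=\big|\textstyle\bigcap_{j\in Q} S_j\big|.
\]
In particular $n_{Q,k}>0$ precisely when the sets $\{S_j : j\in Q\}$ are \emph{not} disjoint. Thus a single CLAV query, followed by an $O(1)$ comparison of the returned count against $0$, decides $r$-Set Disjointness on $Q$. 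Note that the choice $\tau_{i,j}=k/r$ makes this work for every $k>0$, so the bound holds uniformly across all values of the threshold parameter.

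Putting the pieces together: a data structure for $(k,r)$-CLAV using space $S$ and query time $T$ yields a data structure for $r$-Set Disjointness using space $O(S)$ and query time $T+O(1)=O(T)$ on an instance with the same $N$. By Conjecture 1 this forces $S\,T^{r}=\widetilde{\Omega}(N^{r})$, which is exactly the claim. I do not expect a genuine obstacle here, since the reduction is size-preserving and in fact parsimonious (it reproduces the intersection cardinality, not merely emptiness); the only points needing care are the uniform scaling $\tau_{i,j}=k/r$, so that the aggregate-time threshold coincides exactly with ``membership in all queried sets,'' and the observation that the caveat about universe size in Conjecture 1 transfers directly because users are in bijection with $U$.
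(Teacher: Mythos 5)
Your proposal is correct and follows essentially the same reduction as the paper: both map each membership $i\in S_j$ to a single triplet so that the CLAV count on a query $Q$ equals the intersection cardinality $\bigl|\bigcap_{j\in Q} S_j\bigr|$, and then invoke Conjecture 1 on the size-preserving instance. The only (immaterial) difference is that the paper uses unit times and sets the threshold $k=r$, whereas you scale the times to $k/r$ so the argument works for an arbitrary fixed $k$.
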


\begin{proof}
    We will show how to use a data structure for the $(k,r)$-CLAV problem to solve the $r$-SetDisjointness Problem with the same space and time usage. Consider the input to the latter problem, i.e. the sets $S_0,\cdots, S_{m-1} \subseteq U$. By relabeling, we can assume that $U=[n]$ where $n:=|\cup_{i=1}^{m} S_{i}|$. We create $m$ regions $\{r_0,\cdots,r_{m-1}\}$ and $n$ users. For every $j \in [n]$ and every $i \in [m]$ such that $j \in S_{i}$, we create a triple $(j,i,1)$. The number of such triples is exactly $N=\sum_{i=1}^{m} |S_{i}|$. We preprocess a data structure for this instance of the $(k,r)$-CLAV problem, using $S$ space. 

    Given a query $Q \subseteq [m]$ to the $r$-SetDisjointness problem, we issue the same query $Q$ with $k=r$ to the data structure for the $(k,r)$-CLAV problem. Since any user spends at most one unit of time in any region, asking for users who spent an aggregate of $r$ time in $r$ regions is precisely asking for the size of the intersection of the corresponding sets. Therefore, if the returned count to the query for the $(k,r)$-CLAV problem is non-zero, we report that the query sets for the $r$-SetDisjointness problem are not disjoint, and report that they are disjoint otherwise.
\end{proof}

Note that our reduction actually works for the $r$-SetIntersectionCount problem, where the goal is not just to return whether the $r$ sets are disjoint or not, but the size of their intersection. We have not seen this problem in previous works. Is $r$-SetIntersectionCount harder than $r$-SetDisjointness? For $r=2$ it was shown that $r$-SetIntersectionCount has the same tight tradeoff as the $r$-SetDisjointness problem (see Appendix C.4 in \cite{goldstein2019hardness}).

However, let us examine their upper bound and see if it generalizes to $r>2$. For a given query time target $T$, \cite{goldstein2019hardness} calls a set large if its size is at least $T$, and otherwise the set is called small. This implies that there are at most $N/T$ large sets. For every pair of such sets, they precompute the answer (the count of the intersection), taking space $S= O((N/T)^2)$. They also store, for every $S_i$, $i \in [m]$, a membership hash table for the elements in $S_i$, taking $O(N)$ space in total. Given a query $|S_i \cap S_j|$, if both $S_i$ and $S_j$ are large, they can answer it in constant time. Otherwise, assume that $S_i$ is small, i.e., has at most $T$ elements. For every element in $S_i$, query in $O(1)$ time whether it is in $S_j$, and return $|S_i \cap S_j|$ for a total of $O(T)$ query time. The space-query tradeoff satisfies $ST^2=O(N^2)$. 

When $r>2$, the same strategy almost works, but now if at least one of the $r$ sets is small, we can take the smallest set in the query, having at most $T$ elements, and now we have to query all the remaining $r-1$ hash tables (for $S_q$, $q \neq i$) for each of these elements. Thus, the query time becomes $O(rT)$. For constant $r$, this gives the same tradeoff as $r$-SetDisjointness, but for $r>2$ this introduces an extra factor $r$ in the query time. Thus, we make the following conjecture.

\begin{conjecture}
Any data structure for the $r$-SetIntersectionCount problem that uses $S$ space and has query time $T$ must satisfy $ST^{r}=\widetilde{\Omega}((rN)^{r})$.
\end{conjecture}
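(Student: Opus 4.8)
Since the displayed statement is a conjecture rather than a theorem, there is no proof to reconstruct; what I can offer is a plan for how I would try to establish the conjectured lower bound and an account of where I expect it to resist. The plan has two halves. First I would pin down a matching upper bound to confirm that $(rN)^r$ is the correct target rather than an artifact. Then I would attempt the lower bound itself, the natural route being to adapt the cell-probe/communication machinery behind the Strong $r$-SetDisjointness Conjecture (Conjecture~1) to the counting version, while carefully tracking the dependence on $r$.

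For the upper-bound half, I would verify that the naive scheme sketched just above the conjecture is tight against it. Calling a set \emph{large} if it has at least $T$ elements, there are at most $N/T$ large sets; precomputing the intersection count of every $r$-tuple of large sets costs $S=O((N/T)^r)$ words, and a per-set membership hash table adds $O(N)$ more. A query whose sets are all large is answered in $O(1)$ by table lookup, while any query containing a small set is answered by scanning that set's $\le T$ elements against the other $r-1$ hash tables, in $O(rT)$ time. Writing the true query time as $\tau=O(rT)$, this yields $S\tau^r = O((N/T)^r (rT)^r)=O((rN)^r)$. So the conjectured tradeoff is exactly the one the naive scheme achieves, and the real content of the conjecture is that the extra $r^r$ factor, forced by having to probe $r-1$ hash tables per scanned element, is unavoidable.

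For the lower-bound half, the tempting first move is a reduction from $r$-SetDisjointness, but I expect this to fail on principle: disjointness is a special case of counting, so a reduction can only transfer the bound $ST^r=\widetilde\Omega(N^r)$, which is weaker than the target by precisely the $r^r$ factor that is supposed to separate the two problems. Hence the extra hardness must be proved intrinsically. The shape of argument I would aim for is an information-transfer or encoding lower bound in which the decoder is charged $\Omega(r)$ probes for each recovered element rather than $\Omega(1)$, so that the factor $r$ enters inside the base of the power (as in $(rN)^r$) rather than merely as an additive $+r$ term in the query time.

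I expect the $r^r$ factor to be the main obstacle, and the evidence suggests it is a genuine one. As the text notes, the only fully worked case, $r=2$, goes the other way: Appendix~C.4 of~\cite{goldstein2019hardness} shows counting and disjointness share the same tradeoff there, and for any constant $r$ the factor $r^r$ is simply absorbed into $\widetilde\Omega(\cdot)$. Thus the conjecture has nonvacuous content only when $r$ grows with the input, and exactly in that regime the known reductions exhibit no gap between counting and detection. A proof would therefore have to demonstrate that counting is \emph{strictly} harder than detection at super-constant $r$ by the full $r^r$ factor; symmetrically, the conjecture could instead be refuted by a cleverer data structure that removes the per-element $r-1$ probes and matches only $\widetilde\Omega(N^r)$. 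Deciding between these two outcomes is the open crux, and it is the reason the statement is stated as a conjecture.
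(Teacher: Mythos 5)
You correctly recognize that the statement is a conjecture, so there is no proof in the paper to reconstruct; the paper's only supporting evidence is precisely the upper-bound analysis you reproduce --- the large/small-set scheme of~\cite{goldstein2019hardness} generalized to $r>2$, where scanning a small set's $\le T$ elements against the other $r-1$ hash tables forces query time $O(rT)$ and hence $S\tau^{r}=O\left((rN)^{r}\right)$, matching the conjectured tradeoff. Your further observations --- that a reduction from $r$-SetDisjointness can transfer only $ST^{r}=\widetilde{\Omega}(N^{r})$, and that the $r^{r}$ factor has nonvacuous content only when $r$ is super-constant (the $r=2$ case being settled in Appendix~C.4 of~\cite{goldstein2019hardness}) --- are sound and consistent with the paper's discussion.
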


The proof of Theorem~\ref{thm:lowergeneral} shows that Conjecture 2 applies to the $(k,r)$-CLAV problem. We now give an upper bound that matches the tradeoff of Conjecture 2.

\subsection{Upper bound for the $(k,r)$-CLAV problem}
\label{sec:exactupper}

Given a fixed $r$, we show below how to construct a data structure $\mathcal{D}_r$ to solve the exact $(k,r)$-CLAV problem for any $k$. 
The data structure builds on the division into \emph{large} and \emph{small} sets as proposed in the exact data structure for the $r$-SetIntersection problem \cite{goldstein2017conditional}. 
However, our data structure requires additional components to deal with users who only appear in a subset of the query.
We recall that the set $T$ is a set of $N$ triplets where each triplet $(i,j,\tau_{i,j})$ denotes that user $u_i$ spent time $\tau_{i,j}$ in region $r_j$, and $R_j\subseteq [n]$ denotes the set of users who spent some time in region $r_j$.

Let $\lambda > 0$ be a suitable parameter. We call a region $r_j$ \emph{large} if $|R_j|>\lambda$, and \emph{small} otherwise. Note that there cannot be more than $\min\{m, N/\lambda \}$ large regions. Let $L_{\lambda}$ and $S_{\lambda}$ be the subsets of all large and small regions in $R$, respectively.  

At construction time, we first sort $T$ with key $(i,j)$ for easily verifying (through binary search) if a given user $u_i$ spent time in a given region $r_j$ and if so, retrieve the amount of time $\tau_{i,j}$ they spent there. Then, for every subset $Q \subseteq L_{\lambda}$, we do the following. We compute the temporary set $V_{Q}$ of all tuples $( i, \tau_{i, Q} )$ with $i \in \bigcup_{r_j \in Q}{R_j}$, and set $\tau_{i, Q} = \sum_{r_j \in Q}{\tau_{i, j}}$. We create two arrays $T_Q$ and $U_Q$ from $V_{Q}$, where $T_Q$ contains all tuples of $V_{Q}$ sorted in non-decreasing order of time $\tau_{i, Q}$, and $U_Q$ has them sorted in non-decreasing order of user index $i$. Our data structure $\mathcal{D}_r$ consists of the sorted version of $T$, $R_j$ for all $r_j \in S_\lambda$, and the $T_Q$ and $U_Q$ arrays for all $Q \subseteq L_{\lambda}$.

Now, given a set $Q = Q_{l} \cup Q_s$ of query regions with $|Q| \leq r$, $Q_l \subseteq L_{\lambda}$ and $Q_s \subseteq S_{\lambda}$, we use $\mathcal{D}_r$ to find the number $n_{Q,k}$ of distinct users who spent at least $k$ time in $Q$ as described in Algorithm \ref{alg:tradeoff}. 
We first look at the precomputed solutions for $Q_l$.
Then, we look at users in the $Q_s$ regions: for each user $u$, we verify if its aggregate time is at least $k$ and we count it only if it was not already counted in the solution for $Q_l$ (i.e., the aggregate time in $Q_l$ for $u$ is already at least $k$).

\begin{algorithm}[t]
\footnotesize
\caption{Exact algorithm for $n_{Q,k}$}\label{alg:tradeoff}
\begin{algorithmic}[1]
\State Perform a binary search on $T_{Q_{l}}$ with $k$ to find the number $n_{Q_l}$ of unique users who spent at least $k$ time in $Q_l$.
\State Compute the set $R_{Q_s}$ of all tuples $( i, \tau_{i, Q_s} )$ with $i \in \bigcup_{r_j \in Q_s}{R_j}$
\State Set $\tau_{i, Q_s} = \sum_{r_j \in Q_s}{\tau_{i, j}}$ and $n_{Q_s} = 0$. 
\For{every $( i, \tau_{i, Q_s} ) \in R_{Q_s}$}
    \If{a tuple $( i, \tau_{i, Q_{l}} )$ with $\tau_{i, Q_{l}} \geq k$ does \emph{not} exist in $U_{Q_l}$} 
    \Comment{Check performed with a binary search on $U_{Q_l}$}
    \State Increment $n_{Q_s}$ provided $\tau_{i, Q_{s}} + \tau_{i, Q_{l}} \geq k$  \Comment{assume $\tau_{i, Q_{l}} = 0$ if no tuple of the form $( i, * )$ exists in $U_{Q_l}$}
    \EndIf
\EndFor
    \State \Return $n_{Q_l} + n_{Q_s}$~ $( = n_{Q, k} )$
\end{algorithmic}
\end{algorithm}

The following theorem captures the trade-off between memory and query time.
We note that, in the special case where $r =\BO{1}$, $m=\BO{n}$, $N=\BO{n^2}$ and $\lambda = \sqrt{n}$, we get a data structure with $\BTO{\sqrt{n}}$ query time and $\BO{n^{3r/2+1}}$ memory. 

\begin{theorem}\label{thm:exactalg}
The above data structure requires $\BTO{(\min\{m, N/\lambda\})^{r}r n}$ construction time and $\BTO{\min\{r \lambda, N \}}$ query time, and uses $\BO{(\min\{m, N/\lambda\})^{r}n+N}$  memory words.
\end{theorem}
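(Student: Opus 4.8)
The plan is to verify correctness of Algorithm~\ref{alg:tradeoff} first, and then to read off the three resource bounds from the sizes of the stored arrays. The correctness argument is the real content, so I would carry it out carefully by partitioning the relevant users according to their aggregate time in the large part $Q_l$ of the query $Q=Q_l\cup Q_s$. A user $u_i$ with $\tau_{i,Q_l}\geq k$ already meets the threshold using only $Q_l\subseteq Q$, hence belongs to $n_{Q,k}$; these are precisely the users returned by the Step~1 binary search on the time-sorted array $T_{Q_l}$, contributing $n_{Q_l}$. Any other qualifying user must satisfy $\tau_{i,Q_l}<k$ and therefore needs a strictly positive contribution from $Q_s$ to reach $k$, so it lies in $R_{Q_s}$ and is examined by the loop. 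The membership test on the index-sorted array $U_{Q_l}$ discards exactly those users already tallied in $n_{Q_l}$, preventing double counting, while the test $\tau_{i,Q_s}+\tau_{i,Q_l}\geq k$ (with the convention $\tau_{i,Q_l}=0$ when $u_i\notin\bigcup_{r_j\in Q_l}R_j$) adds precisely the remaining qualifying users. I would close the argument by noting that no qualifying user is missed: a user absent from $R_{Q_s}$ has zero time in $Q_s$, so its total time in $Q$ equals $\tau_{i,Q_l}$, which is counted in Step~1 iff it is at least $k$.

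For the bounds, I would argue as follows. The sorted copy of $T$ and the lists $R_j$ for small regions occupy $\BO{N}$ words. Only subsets $Q\subseteq L_\lambda$ with $|Q|\leq r$ need to be precomputed, since $Q_l\subseteq Q$ forces $|Q_l|\leq r$; writing $M:=\min\{m,N/\lambda\}\geq|L_\lambda|$, there are $\BO{M^r}$ such subsets, and each stores $T_Q,U_Q$ of size at most $n$, giving $\BO{M^r n}$ words and hence the claimed $\BO{M^r n+N}$. For construction time, sorting $T$ costs $\BTO{N}$, and for each of the $\BO{M^r}$ subsets, forming $V_Q$ is a merge of the at most $r$ index-sorted lists $R_j$ (total size $\BO{rn}$) with time lookups in the sorted $T$, followed by sorting the $\leq n$ aggregates by time and by index; this is $\BTO{rn}$ per subset, for $\BTO{M^r r n}$ overall. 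For query time, $Q_s$ has at most $r$ small regions, each with $|R_j|\leq\lambda$, so $|R_{Q_s}|\leq r\lambda$ and is trivially at most $N$; computing $R_{Q_s}$ together with the $|R_{Q_s}|$ binary searches on $U_{Q_l}$ and the single search on $T_{Q_l}$ costs $\BTO{\min\{r\lambda,N\}}$.

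The main obstacle is the correctness bookkeeping rather than the arithmetic: I must simultaneously avoid double-counting users that appear in $Q_l$ with $\tau_{i,Q_l}\geq k$ and also in some region of $Q_s$, and correctly ``top up'' users whose $Q_l$-time alone falls short but whose combined $Q_l$- and $Q_s$-time clears $k$. Once this case analysis is pinned down, the space, construction, and query bounds follow directly from the set sizes $|L_\lambda|\leq M$, $|T_Q|,|U_Q|\leq n$, and $|R_{Q_s}|\leq r\lambda$.
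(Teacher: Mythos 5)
Your proposal is correct and follows essentially the same approach as the paper: the space, construction, and query bounds are obtained by exactly the same counting of the $\BO{(\min\{m,N/\lambda\})^r}$ subsets of large regions, the size-$n$ arrays $T_Q,U_Q$, the $\BO{N}$ words for the sorted $T$ and small-region lists, and the $|R_{Q_s}|\leq \min\{r\lambda,N\}$ query work. Your explicit correctness case analysis (top-up users versus users already counted via $Q_l$) is a welcome addition, but it matches what the paper states informally in the algorithm description preceding the theorem, so it is not a different route.
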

\begin{proof}
The number of subsets $Q \subseteq L_{\lambda}$ with $|Q| \leq r$ is $\BO{\left(\min\{m, N/\lambda\}\right)^r}$, and we require $\BTO{r n}$ time to construct the $T_Q$ and $U_Q$ arrays for each such $Q$. Thus, $\mathcal{D}_r$ can be constructed in $\BTO{(\min\{m, N/\lambda\})^{r} r n}$ time. We require $\BO{(\min\{m, N/\lambda\})^{r}n}$ memory words to store all $T_Q$ and $U_Q$ arrays, and $\BO{N}$ memory words to store the sorted version of $T$ and the $R_j$ sets for all $r_j \in S_{\lambda}$. Thus, $\mathcal{D}_r$ occupies $\BO{(\min\{m, N/\lambda\})^{r}n+N}$ memory words. The cost of a query is dominated by the time required to compute $R_{Q_s}$ and $n_{Q_s}$, both of which can be done in $\BTO{\min\{r \lambda, N \}}$ time.
\end{proof}

\section{Approximate solutions for the $(k,r)$-CLAV problem}

Here, we derive approximate solutions for the generic $(k,r)$-CLAV problem  (Definition~\ref{def:general}). 
\subsection{Approximation via Sampling}\label{sec:approxsampling} 
Let $Q$ be a set of $r$ query regions and denote with $T_Q\subset T$ the set of triples appearing in region $Q$.
Let $N_Q$ and $n_Q$ indicate the number of triples and the number of distinct users in $Q$, respectively. 
In order to compute $n_{Q,k}$ exactly, one needs to read all $N_Q$ existing triples from the query area: however, if an approximate estimate is sufficient, we can outline a sampling-based query procedure that allows us to estimate $n_{Q,k}$ quickly within reasonable error bounds with a user-specified probability.
First of all, we note that for any query set $Q$, we can rewrite $n_{Q,k}$ as 
\begin{equation}\label{eq:sum_nq}
    n_{Q,k} \,=\, \sum_{(i,j,\tau_{i,j})\in T_Q} \frac{\phi_i}{c_i}
\end{equation}
where $c_i$ is the number of regions in $Q$ containing user $i$ (i.e., the number of distinct triples associated with $i$ in $T_Q$), while $\phi_i$ is an indicator variable that takes on the value $1$ if $\sum_{r_j\in Q}\tau_{i,j}\geq k$, and 0 otherwise.
The scaling factor $c_i$ is needed since a user $i$ with $\phi_i=1$ appears in $c_i$ regions of $Q$.
In what follows, we will denote estimators for $n_{Q,k}$  by $\hat{n}_{Q,k}$.

Since this procedure aims at speeding up query time only, we admit a preprocessing that stores triples as follows. We place each triple $(i, j, \tau_{i, j}) \in T$ into an array $\overline{T}$ and sort $\overline{T}$ by $(i, j)$ which will allow us to quickly determine, through a binary search, whether a given user $u_i$ spent time in a given region $r_j$, and, if so, to retrieve $\tau_{i, j}$.

After the preprocessing described above, the query procedure for deriving an unbiased $\hat{n}_{Q,k}$ is outlined in Algorithm \ref{alg:cap}. Specifically, we define a sample size $s$, which allows us to obtain a good approximation to $n_{Q,k}$, as per Theorem \ref{thm: nqk_estimate}. We sample $s$ triplets uniformly at random (with replacement) from $T_Q$. For each sampled triplet $(i, j, *)$, we perform a binary search on $\overline{T}$ for each $r_q \in Q \setminus \{r_j\}$ to determine whether $u_i$ spent time in $r_q$ and, if so, to retrieve $\tau_{i, q}$. If $\sum_{r_q \in Q}{\tau_{i, q}} \geq k$, then we set $\phi_i = 1$; otherwise, we set $\phi_i = 0$. By summing the $\frac{\phi_{i}}{c_{i}}$ values over all sampled triplets and rescaling appropriately (line \ref{line:rescale} of Algorithm \ref{alg:cap}), we obtain the estimator $\hat{n}_{Q,k}$.

\begin{algorithm}[t]
\footnotesize
\caption{Sampling-Based Estimate of $n_{Q,k}$}\label{alg:cap}
\begin{algorithmic}[1]

\State $s\gets \frac{r^2}{2\varepsilon^2}\ln(2/\delta)$
\State $z\gets 0 $
\For {$l \gets 1 \text{\textbf{ to }} s$}
\State $(i,j, \tau_{i,j})\gets \text{random triple from $T_Q$} $
\State $t \gets \tau_{i,j}$,~ $c_i \gets 1$
\For {each $r_q \in Q - \{r_{j}\}$}
\If {$(i, q, \tau_{i, q}) \in {\overline{T}}$} \Comment{found via binary search}
\State $t \gets t + \tau_{i, q}$,~ $c_i \gets c_i + 1$
\EndIf
\EndFor
\If {$t \geq k$} $\phi_i \gets 1$
\Else~ $\phi_i \gets 0$
\EndIf
\State $z\gets z+ \frac{\phi_i}{c_{i}}$
\EndFor

\State $\hat{n}_{Q,k}\gets z \cdot \frac{N_Q}{s} $ \label{line:rescale}
\State \Return $\hat{n}_{Q,k}$
\end{algorithmic}
\end{algorithm}

The following claim shows that $\hat{n}_{Q,k}$ obtained at line \ref{line:rescale} of Algorithm \ref{alg:cap} is unbiased, and it outlines the quality that can be achieved by sampling $s$ entries within the region.
\begin{theorem}\label{thm: nqk_estimate}
Let $0 < \delta < 1$ and $0 < \varepsilon < 1$ be the confidence and accuracy parameters, respectively, and let $Q$ be a set of $r$ query regions. Then, for any $s \geq \frac{r^2}{2\varepsilon^2}\ln(2/\delta)$, one can compute an unbiased estimator $\hat{n}_{Q,k}$ of $n_{Q,k}$ by querying $s$ triplets drawn uniformly at random (with replacement) from $T_Q$ with the guarantee that $\hat{n}_{Q,k}\in [n_{Q,k}-{\varepsilon} n_{Q}, n_{Q,k}+{\varepsilon} n_{Q}]$ will hold with probability at least $1-\delta$.
\end{theorem}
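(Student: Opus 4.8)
The plan is to view the estimator $\hat{n}_{Q,k} = z \cdot N_Q/s$ produced at line~\ref{line:rescale} of Algorithm~\ref{alg:cap} as a scaled average of $s$ i.i.d.\ bounded random variables, and apply a Hoeffding bound. Concretely, for the $l$-th sampled triple let $X_l = \phi_{i}/c_{i}$ be the quantity added to $z$. Since each sample is drawn uniformly at random (with replacement) from the $N_Q$ triples of $T_Q$, each $X_l$ is an independent copy of a variable $X$ with $\E[X] = \frac{1}{N_Q}\sum_{(i,j,\tau_{i,j})\in T_Q}\frac{\phi_i}{c_i}$. By \refeq{sum_nq} this sum equals $n_{Q,k}$, so $\E[X] = n_{Q,k}/N_Q$ and hence $\E[\hat{n}_{Q,k}] = (N_Q/s)\sum_{l=1}^s \E[X_l] = n_{Q,k}$, which gives unbiasedness immediately.

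Next I would set up the concentration argument. Because $\phi_i \in \{0,1\}$ and $c_i \geq 1$, every $X_l$ lies in $[0,1]$, so the range parameter in Hoeffding's inequality is $1$. Writing $\bar{X} = z/s$ and $t = \varepsilon n_Q / N_Q$, the event $|\hat{n}_{Q,k} - n_{Q,k}| > \varepsilon n_Q$ is exactly $|\bar{X} - \E[\bar{X}]| > t$, which Hoeffding bounds by $2\exp(-2st^2) = 2\exp\!\big(-2s\,\varepsilon^2 n_Q^2/N_Q^2\big)$. Requiring this tail to be at most $\delta$ yields the data-dependent sample size $s \geq \frac{N_Q^2}{2\varepsilon^2 n_Q^2}\ln(2/\delta)$.

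The step I expect to be the crux is converting this into the clean, query-independent bound $s \ge \frac{r^2}{2\varepsilon^2}\ln(2/\delta)$ claimed in the statement. The key observation is that since $|Q| = r$, each distinct user contributes at most one triple per region and hence at most $r$ triples to $T_Q$; therefore $N_Q = \sum_{j\in Q}|R_j| \le r\, n_Q$, i.e.\ $N_Q/n_Q \le r$. Substituting $N_Q^2/n_Q^2 \le r^2$ into the requirement above shows that the stated $s$ dominates $\frac{N_Q^2}{2\varepsilon^2 n_Q^2}\ln(2/\delta)$, so the Hoeffding tail is at most $\delta$ and the claimed interval holds with probability at least $1-\delta$. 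The only subtlety to double-check is that the error is measured against $n_Q$ (distinct users) rather than $N_Q$ (triples); the factor-$r$ gap between these two quantities is precisely what the bound $N_Q \le r n_Q$ absorbs, which is why the sample size depends on $r$ but is otherwise independent of the query contents.
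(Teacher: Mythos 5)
Your proof is correct and follows essentially the same route as the paper: unbiasedness via the identity in \refeq{sum_nq} (which the paper re-derives through the counts $v_h$), Hoeffding's inequality on the bounded per-sample variables, and the bound $N_Q \le r\,n_Q$ to turn the data-dependent sample size into the stated $s \ge \frac{r^2}{2\varepsilon^2}\ln(2/\delta)$. Your only deviations are cosmetic: you apply Hoeffding to the normalized variables in $[0,1]$ and rescale, whereas the paper applies it to the scaled variables $Y_q N_Q/s$, yielding the identical exponent.
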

\begin{proof}
Let $\hat{n}_{Q,k}$ be computed according to Algorithm \ref{alg:cap}. We need to show that it is an unbiased estimate of $n_{Q,k}$ and to determine the smallest sample size required to achieve the stated error bounds.
Let $s>0$ denote the sample size. Then we will query $s$ triplets (out of the $N_Q$ triplets appearing in $T_Q$) to obtain our bounded error estimate of $n_{Q,k}$. 
For $1 \leq q \leq s$, let the $q$-th sample drawn from $T_Q$ have the form $(i_q, *, * )$, that is, it belongs to user $u_{i_q}$. For each $q$ we then define the following random variable:
\begin{align*}
    Y_q = \left\{\begin{array}{ll}
        \frac{1}{c_{i_q}} &\text{ if } \phi_{i_q} = 1; \\
        0 &\text{ otherwise. }   
    \end{array}\right. 
\end{align*}
Let $Z = \sum_{q=1}^s Y_q$ and note that this will be the value stored in $z$ at line \ref{line:rescale} of Algorithm \ref{alg:cap}.
Observe that $\E[Z]= \E[\sum_{q=1}^s Y_q]= s \E[Y_q] $ and 
$\E[Y_q] = 0 \cdot \Pr[Y_q = 0 ]+ \sum_{h=1}^r\frac{1}{h}\Pr\left[Y_q = \frac{1}{h}\right]= \sum_{h=1}^r\frac{1}{h}\cdot\frac{v_h}{N_Q},$ 
where $v_h$ is defined as the number of triples in $T_Q$ corresponding to users who visited exactly $h$ regions of $Q$ and spent a total of at least $k$ units of time in those regions.
It follows that $\E[Y_q] = n_{Q,k}/N_Q$, where we exploited the fact that $n_{Q,k} = \sum_{h=1}^r\frac{1}{h}\cdot{v_h}$.
Then we can build an unbiased estimator for $n_{Q,k}$ as $\hat{n}_{Q,k} = z\cdot\frac{N_Q}{s}\, .$
Indeed, we have $\E[\hat{n}_{Q,k}] = \frac{N_Q}{s}\cdot s\E[Y_q]=n_{Q,k}$.

We can now derive an error bound using Hoeffding's inequality as follows:
\begin{align*}
    \Pr[|\hat{n}_{Q,k}-n_{Q,k}|\geq t]= \Pr\left[\left|\sum_{q=1}^s\frac{Y_q {N_Q}}{s}-n_{Q,k}\right|\geq t\right]\leq 2\exp\left(-\frac{2t^2}{\sum_{q=1}^s(M_q-m_q)^2}\right)\leq 2\exp\left(-\frac{2t^2s}{{N_Q^2}}\right).
\end{align*}
where $m_q$ and $M_q$ are the lower and upper bounds, respectively, on the variable $Y_q\cdot\frac{N_Q}{s}$. 
The last inequality follows from the fact that $m_q=0$ and $M_q = N_Q/s$.
Hence, a good estimate of $n_{Q,k}$ can be obtained with probability $\geq 1 - \delta$ using any $s\geq \frac{N_Q^2}{2 t^2}\ln(2/\delta)$. 
We can then fix $t= \varepsilon n_Q$, which allows us to pick any $s\geq \frac{N_Q^2}{2 (\varepsilon n_Q)^2}\ln(2/\delta)$. Now, since $N_Q\leq r\cdot n_Q$, we conclude that $\hat{n}_{Q,k}\in [n_{Q,k}-{\varepsilon} n_{Q}, n_{Q,k}+{\varepsilon} n_{Q}]$ holds with probability at least $1-\delta$ for any $s\geq\frac{r^2}{2 \varepsilon^2}\ln(2/\delta)$.
\end{proof}
Overall, with this approach query time amounts to searching exact solutions for  $s\geq{\frac{r^2}{2\varepsilon^2}\ln(2/\delta)}$ triplets: here, querying a triplet $(i, j, *)$ means checking $u_i$'s presence in the $r-1$ regions of $Q$ other than $r_j$, where each region takes binary-search time up to $O(\log n_Q)$ if only $\overline{T}$ is produced at the preprocessing.
It follows that an upper bound on query time is $\BO{\frac{r^3}{\varepsilon^2}\log(1/\delta)\log{n_Q}}$, while the space complexity of the approach is $O(N)$, e.g. the cost of storing the input.
Additionally, note that query time can be further reduced in expectation by storing auxiliary data structures, allowing for \emph{faster-than-binary-search} expected-time user lookups within a single region.
Finally, we point out that this straightforward sampling scheme works for any $r$ and $k$.

\subsection{Approximation via Sketch}\label{sec:approxsketching}
We now propose a succinct data structure for the $k$-CLAV problem, which consists of a sketch representing each region. At query time, the sketches of all $r$ regions in the query set $Q$ are merged to provide the estimate of $n_{Q,k}$. 
Let $n_Q$ denote the number of distinct users across all regions of $Q$. 
W.l.o.g., we assume that $k\geq r^2$: it indeed suffices to scale times in $T$ and $k$ accordingly.

The sketch computed for each region is a modified version of the Flajolet-Martin (FM) sketch \cite{flajolet1985probabilistic,alon1996space}, where each bit position of the FM is replaced by a Count-Min sketch consisting of a vector of $\BO{1/\varepsilon}$ counters, each with $\BT{\log r}$ bits, for a suitable parameter $\varepsilon > 0$.
The sketches of regions in $Q$ can be easily merged, by pairwise summing of vector entries.
The underlying FM sketch allows us to estimate the number of distinct users in region $Q$, while the Count-Min sketch allows us to separate the overall number of distinct users from the number of distinct users with a long aggregated visit.

We now describe the sketch $S_j$ for a given region $r_j$.
Let $h: [n]\rightarrow [n]$ and $g_i: [n]\rightarrow [1/\varepsilon]$, for any $i \in [\log n+1]$,  be hash functions randomly chosen from the 2-wise independent hash families $\mathcal H$ and $\mathcal G$, respectively.
For any $\ell\in [\log n+1]$ and region $r_j$, let $V_{\ell,j}$ be a vector of $1/\varepsilon$ counters, each consisting of $\BT{\log r}$ bits and initially set to 0. 
We denote by $\text{tail}(x)$ the number of trailing zeros in the binary representation of the natural number $x$; in other words,  $2^\text{tail(x)}$ is the largest power of two that divides $x$.
The sketch of each region is computed 
as in Algorithm \ref{alg:sketch}. 
Each user $u_i$ in region $r_j$ increases counter $g_{\ell_i}(i)$ in vector $V_{\ell_i,j}$ by $\lfloor \tau_{i,j} r^2 /k \rfloor$, where $\ell_i = \text{tail}(h(i))$, that is, time is now measured in multiples of $k/r^2$ (each user appearing in a region with time lower than $k/r^2$ is ignored).
Note that the counters do not exceed $r^2$.
We observe that any user whose aggregate time is at least $k$ induces at least $r^2-r$ increments to a counter (the $-r$ term is due to the roundings).

\begin{algorithm}[t]
\footnotesize
\caption{Sketch construction for region $r_j$}\label{alg:sketch}
\begin{algorithmic}[1]
\For {Each user $u_i$ in $R_j$}
\If {$\tau_{i,j}\geq k/r^2$}
\State $c_{i,j} = \lfloor \tau_{i,j} r^2 / k \rfloor$
\State $\ell_i = \text{tail}(h(i))$ 
\State $V_{\ell_i,j}[g_{\ell_i}(i)] = \min \left\{r^2, V_{\ell_i,j}[g_{\ell_i}(i)]+ c_{i,j}\right\}$
\EndIf
\EndFor
\end{algorithmic}
\end{algorithm}

For a given query set $Q$, we perform the operations in Algorithm \ref{alg:sketchquery}. 
We first construct the sketch $S_Q$ by combining the sketches $S_j$ for all $r_j \in Q$.
We know that for every $\ell \in [\log n+1]$, each $S_j$ contains a vector $V_{\ell, j}$ of $\BO{1/\varepsilon}$ counters, each of length $\BT{\log{r}}$ bits.
The corresponding vectors in $S_Q$ are $\tilde V_{\ell}$ for $\ell \in [\log n+1]$. We start by setting $\tilde V_{\ell}[i] = \min\{r^2, \sum_{j\in Q} V_{\ell,j}[i]\}$ for each index $i \in [1/\varepsilon]$.
Then, we search for the largest integer $\tilde{\ell}$ such that the vector $\tilde V_{\tilde{\ell}}$ contains a counter of value at least $r^2-r$; this value should be what we get from the standard FM when there are no points with aggregate time lower than $k$.
Our estimate is then $\hat{n}_Q=2^{\tilde{\ell}}$. 
We boost success probabilities using the standard median trick, that is, by keeping $\BO{\log (1/\delta)}$ independent repetitions of the sketch and taking the median of the query results.

\begin{algorithm}[t]
\footnotesize
\caption{Query procedure with query $Q$}\label{alg:sketchquery}
\begin{algorithmic}[1]
\State Let $\tilde V_{\ell}$ for $\ell\in [\log n+1]$ be a vector of $\BO{1/\varepsilon}$ counters of $\BT{\log r}$ bits \Comment{sketch $S_Q$}
\For{each $\ell\in [\log n+1]$ and every $i\in[1/\varepsilon]$}
    \State Set $\tilde V_{\ell}[i] = \min\{r^2, \sum_{j\in Q} V_{\ell,j}[i]\}$ \Comment{$S_Q =$ sum of the sketches $S_{j}$ for all $r_j\in Q$}
\EndFor
\State Let $\tilde{\ell}\in [\log n+1]$ be the largest integer such that vector $\tilde V_{\tilde{\ell}}$ has at least one counter not smaller than $r^2-r$
\State \Return $2^{\tilde{\ell}}$~ $(= \hat{n}_Q)$
\end{algorithmic}
\end{algorithm}

\begin{theorem}\label{thm:sketch}
Consider a query set $Q$ containing $r$ regions, and let $n_{Q,k}$ and $n_{Q,k}^-$ be the number of users that appear in the $r$ regions with an aggregate time of at least $k$ and $k(1-1/r)$, respectively. 
Let $\hat{n}_Q$ be the value returned by Algorithm \ref{alg:sketchquery}.
Then, with a probability of at least $1-\delta$, we have $n_{Q,k}/3 \leq  \hat{n}_Q \leq 3 n_{Q,k}^-$ if $\varepsilon = \BO{n_Q / n_{Q,k}}$, and $n_{Q,k}/3 \leq  \hat{n}_Q \leq 3 n_{Q,k}^- +\BO{\varepsilon n_Q}$ otherwise.
The data structure requires $\BO{m \varepsilon^{-1} \log n \log r \log (1/\delta)}$ bits and $\BO{\varepsilon^{-1} \log n \log (1/\delta)}$ query time.
\end{theorem}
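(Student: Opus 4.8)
The plan is to analyze the two sides of the estimate $\hat n_Q = 2^{\tilde\ell}$ independently, using the Flajolet--Martin level structure induced by $h$ to capture the ``signal'' and the Count--Min vectors to suppress the ``noise'', and then to amplify a constant success probability to $1-\delta$ via the median trick. First I would isolate the two deterministic facts that drive everything. A user with aggregate time $\ge k$ contributes $\sum_{j\in Q}\lfloor\tau_{i,j}r^2/k\rfloor\ge r^2-r$ to its own counter, since there are at most $r$ floors and each loses less than $1$; conversely, if a user's own contribution reaches $r^2-r$, then its aggregate time is at least $(r^2-r)k/r^2=k(1-1/r)$. Hence every user counted in $n_{Q,k}$ forces a counter to $\ge r^2-r$ by itself, while only users counted in $n_{Q,k}^-$ can do so. This is precisely why the guarantee is sandwiched between $n_{Q,k}$ and $n_{Q,k}^-$, and it also justifies that the $\le r^2$ counter values fit in $\Theta(\log r)$ bits.

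For the lower bound $\hat n_Q\ge n_{Q,k}/3$ I would argue purely on the FM side. Each good user $i$ (aggregate $\ge k$) drives the counter $g_{\ell_i}(i)$ at level $\ell_i=\mathrm{tail}(h(i))$ to at least $r^2-r$, and adding other users' contributions together with the cap at $r^2$ can only keep it $\ge r^2-r$; thus $\tilde\ell\ge\max_i\ell_i$ over the good users. Letting $X_\ell$ count the good users reaching level $\ge\ell$, we have $\mathbb{E}[X_\ell]=n_{Q,k}2^{-\ell}$, and by $2$-wise independence of $h$ the indicators are pairwise independent, so $\mathrm{Var}[X_\ell]\le\mathbb{E}[X_\ell]$. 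Choosing $\ell$ with $2^\ell\approx n_{Q,k}/3$ and applying Chebyshev gives $X_\ell\ge1$, hence $\hat n_Q\ge n_{Q,k}/3$, with constant probability.

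The upper bound is the crux and splits into two events. The easy half is that no user counted in $n_{Q,k}^-$ lands above level $\log_2(3n_{Q,k}^-)$: the expected number of such users there is $\le n_{Q,k}^- 2^{-\ell}\le 1/3$, which Markov handles. The hard half, and the main obstacle, is ruling out \emph{false positives} --- a counter reaching $r^2-r$ purely from the aggregated contributions of users with aggregate time $<k(1-1/r)$, each of whom contributes strictly less than $r^2-r$ on its own. Here I would fix a level $\ell$, write a counter's ``bad mass'' as a sum of per-user contributions (each at most $r^2$) selected by $g_\ell$, and control it by a second-moment bound using $2$-wise independence of $g_\ell$: the mean bad mass in a counter is $\varepsilon\,2^{-(\ell+1)}M_{\mathrm{bad}}$ with $M_{\mathrm{bad}}<n_Q r^2$, and its variance is at most $r^2$ times its mean. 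Comparing this against the threshold $r^2-r$ is what yields the additive $O(\varepsilon n_Q)$ noise floor, and the regime split follows: when $\varepsilon=O(n_{Q,k}/n_Q)$ the floor lies below the signal $n_{Q,k}^-$ and is absorbed into the constant, giving the clean bound $\hat n_Q\le 3n_{Q,k}^-$, and otherwise only $\hat n_Q\le 3n_{Q,k}^-+O(\varepsilon n_Q)$. The genuinely delicate step is that a union bound over the $1/\varepsilon$ counters tends to cancel the $\varepsilon$-savings against the per-counter probability, so a naive Markov or Chebyshev argument only buys an $O(n_Q)$ floor; extracting the claimed $O(\varepsilon n_Q)$ requires a sharper accounting of the collisions (e.g.\ bounding the expected number of over-threshold counters rather than a term-by-term union), and I expect this to be where most of the work lives.

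Finally I would boost correctness and count resources. Each independent copy places $\hat n_Q$ in the target interval with probability $\ge 2/3$, so the median of $O(\log(1/\delta))$ copies lies in the interval with probability $\ge 1-\delta$ by a Chernoff bound on the number of good copies. For space, each region stores $\log n+1$ vectors of $1/\varepsilon$ counters of $\Theta(\log r)$ bits each, giving $O(m\,\varepsilon^{-1}\log n\,\log r\,\log(1/\delta))$ bits in total; for time, a query merges the $r$ per-region vectors level by level and scans for the largest level with an over-threshold counter, matching the stated query-time bound.
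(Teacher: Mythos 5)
Your handling of the signal side is correct and is essentially the paper's argument: your two rounding facts (aggregate time $\ge k$ forces a user's own contribution to reach $r^2-r$; reaching $r^2-r$ forces aggregate time $\ge k(1-1/r)$) are exactly the paper's notion of $(r^2-r)$-long users, and your Chebyshev/Markov level analysis is an unpacked version of the Flajolet--Martin guarantee that the paper simply imports from Alon--Matias--Szegedy: deleting the non-long users makes $\tilde\ell$ the index a standard FM sketch would report for the set of $\hat n$ long users, with $n_{Q,k}\le\hat n\le n_{Q,k}^-$, hence $n_{Q,k}/3\le 2^{\tilde\ell}\le 3n_{Q,k}^-$ with constant probability. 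The median trick and the space/time accounting also coincide with the paper's.

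The genuine gap is the step you yourself flag and defer: you never prove the $O(\varepsilon n_Q)$ noise floor, and the route you sketch cannot deliver it. A false positive at level $\ell$ does not require heavy aggregation of many non-long users in one counter: two users whose aggregate times are just below $k(1-1/r)$ contribute up to $r^2-r-1$ each, so a \emph{single} collision under $g_\ell$ already pushes a counter past $r^2-r$ (for $r\ge 2$). With $M_\ell\approx n_Q/2^\ell$ users at level $\ell$ hashed $2$-wise independently into $O(1/\varepsilon)$ counters, the expected number of colliding pairs is $\Theta(M_\ell^2\varepsilon)$, and even your sharper accounting (expected number of over-threshold counters, summed over all levels above $\ell_0$) remains $\Omega(1)$ until $2^{\ell_0}=\Omega(n_Q\sqrt{\varepsilon})$. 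So the second-moment/collision-counting plan proves an additive error of $O(n_Q\sqrt{\varepsilon})$, not the claimed $O(n_Q\varepsilon)$. The paper closes this step with a different argument: Markov to bound the number of users updating $\tilde V_{\ell}$ by $O(n_Q/2^{\ell})$, followed by a balls-and-bins \emph{maximum-load} bound of $O(\varepsilon n_Q/2^{\ell})$ per counter (stated ``for a sufficiently large $n_Q$''), concluding that no counter at a level with $2^{\ell}>O(\varepsilon n_Q)$ can reach $r^2-r$. Note that this max-load bound is the dense-regime bound (many more balls than bins), while at the critical levels $2^{\ell}\approx\varepsilon n_Q$ the number of balls is comparable to the number of bins --- exactly the regime where your two-user collision obstruction lives. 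Your instinct about where the difficulty sits is therefore right, but to obtain the theorem as stated you must either reproduce the paper's max-load argument together with its implicit density assumption, or modify the structure (e.g., more counters per vector); as written, your proposal establishes the statement only with $\sqrt{\varepsilon}$ in place of $\varepsilon$ in the additive term.
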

\begin{proof}
We say that user $u_i$ is $(r^2-r)$-long 
if $\sum_{r_j\in Q}\lfloor \tau_{i,j}r^2/k\rfloor \geq r^2-r$.  
Due to floor operations, all users with aggregate time at least $k$ is a $(r^2-r)$-long user.
We observe that for each $(r^2-r)$-long user $u_i$ we have $\tilde V_{\ell_i}[g_{\ell_i}(i)]\geq r^2-r$ where $\ell_i=\text{tail}(h(i))$. 
Let $\hat{n}$ be the number of $(r^2-r)$-long users. Then we have $n_{Q,k} \leq \hat{n}\leq n^-_{Q}$.
Suppose that we remove from the sketches $S_j$ and from $S_Q$ all updates induced by users who are not $(r^2-r)$-long.
All nonzero counters in $S_Q$ will then have a value of at least $r^2-r$, and the index $\tilde{\ell}$ is equivalent to the largest index with a non-zero vector: this is the value extracted by a standard FM sketch with the same hash function $h$.
Therefore, we have that $\hat{n}/3 \leq  2^{\tilde{\ell}+1/2} \leq 3\hat{n}$ with constant probability $1/3$ \cite{alon1996space}, and hence  $n_{Q,k}/3 \leq  2^{\tilde{\ell}} \leq 3n_{Q,k}^-$.

We now consider the contributions of users that are not $(r^2-r)$-long.
We observe that they can affect the value of $\tilde{\ell}$ only if there exists a vector $\tilde V_{\ell'}$ with $\ell'>\tilde{\ell}$ and an index $i'$ such that $\tilde V_{\ell'}[i']\geq r^2-r$.
We expect that $n_Q/2^{\ell'}$ users have a tail of at least $\ell'$; then,
 by the Markov inequality, there are at most $\BO{n_Q/2^{\ell'}}$ users updating $\tilde V_{\ell'}$ with constant probability.
Moreover, by a balls and bins argument, we find that the maximum load of each counter in $\tilde V_\ell$ is $\BO{n_Q \varepsilon / 2^{\ell}}$ with high probability for a sufficiently large $n_Q$.
Since a user who is not $(r^2-r)$-long cannot increase any counter by more than $r^2-r$, no counter in any vector $\tilde V_{\ell}$ with $2^\ell> \BO{n_Q \varepsilon}$ can reach the value $r^2-r$.
Therefore, with constant probability, we will have an additive error of $\BO{n_Q \varepsilon}$.

Each sketch consists of $\BO{\log n}$ vectors, each containing $\BO{1/\varepsilon}$ counters of $\BO{\log r}$ bits, giving $\BO{m \varepsilon^{-1} \log n \log r}$ total bits.
Since two counters can be summed in constant time, the query time is $\BO{r \varepsilon^{-1} \log n}$.
By including the cost of the median trick, we get the stated bounds.
\end{proof}

We conjecture that, by using sketches for counting distinct items that provide an $\varepsilon$-approximation, as in \cite{baryossef2002countdistinct}, we can construct a sketch for $(k,r)$-CLAV that yields an $\epsilon n_Q$ approximation for any $\varepsilon>0$; this would make the sketch size proportional to $\varepsilon^{-2}$.

\section{Geometric-CLAV}\label{sec:geometric}

In this section, we study the Geometric-CLAV problem (Definition~\ref{def:geometric}). 
We will assume input points to be in $\mathbb{R}^d$  with $d=O(1)$ for ease of exposition. 
In Section \ref{sec:geolbuncon}, we provide the unconditional space lower bound.
In Sections \ref{sec:condlb1} and \ref{sec:condlb2}, we prove two conditional lower bounds, that depend on the fast-query version of the $r$-SetDisjointness Conjecture and on the Boolean-Matrix-Multiplication (BMM) conjecture\footnote{The BMM conjecture is a 
        different prominent conjecture and it is independent of the multi-SetDisjointness conjecture. It thus provides additional evidence for the difficulty of Geometric CLAV problem.}~\cite{subcubic,bansal2009regularity}, respectively.
Finally, in Section \ref{sec:ubgeometric}, we provide two data structures based on tabulating and on a reduction to the colored dominance counting problem.

\subsection{An Unconditional Space Lower Bound}\label{sec:geolbuncon}

Here, we prove an unconditional lower bound for the space complexity of the Geometric-CLAV problem.
We first define our point set (the locations of the users). By $X_{i:s}$ we denote a point which has all of its coordinates set to zero except the $i$-th coordinate which is set to value  $s$.
Let $m'=\frac{m}{2d}$.
Define a point set $X_i = \{X_{i:s}|  s \in \mathbb{Z}, -m' \le s \le m',s\not = 0\}$.
For example, for $d=2$, $X_1 = \{(-m',0), (-m'+1,0), \ldots, (-1,0), (1,0), \ldots, (m',0)\}$. 
Each $X_i$ contains $2m'$ points and we define the set $X=X_1 \cup \ldots \cup X_d$. $X$ has 
$m$ points. 

Consider a sequence of $2d$ non-negative integer values $\sigma = (\ell_1, r_1, \ell_2, r_2, \ldots, \ell_d, r_d)$. 
We say $\sigma$ is valid if $1 \le \ell_i, r_i \le m'$ for each $1 \le i \le d$. 
We can now define the rectangle $Q(\sigma) = [-\ell_1,r_1] \times \ldots \times [-\ell_d, r_d]$. 
Let $\Q$ be the set of all the rectangles obtained in this way over all valid sequences.
Observe that $|\Q| = m'^{2d}$ and that
each rectangle in $\Q$ has exactly $2d$ points on its boundary.

\begin{theorem}\label{thm:geolbuncon}
    Any data structure for the Geometric-CLAV problem requires $\min\{n,(m/2d)^{2d}\}$ bits of storage
    regardless of the query time.
\end{theorem}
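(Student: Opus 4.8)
The plan is an encoding (adversary) argument: exhibit $2^M$ input instances, with $M=\min\{n,(m/2d)^{2d}\}$, all on the fixed point set $X$ and answered over the query family $\Q$, such that any two distinct instances force different answer profiles. Then a correct data structure must occupy distinct memory states on instances with distinct profiles, so it must store at least $\log_2 2^M = M$ bits, independently of query time. Throughout I fix the threshold $k=2d$.

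For each valid sequence $\sigma=(\ell_1,r_1,\ldots,\ell_d,r_d)$, recall that $Q(\sigma)$ has exactly the $2d$ boundary points $B(\sigma)=\{X_{i:-\ell_i},X_{i:r_i}:1\le i\le d\}$, which are distinct and lie in $X$. I introduce a gadget user $u_\sigma$ that spends time $1$ at each point of $B(\sigma)$ and nowhere else. Writing $\preceq$ for the componentwise order on the $2d$ parameters, a direct coordinate check gives the dominance characterization: a present $u_\sigma$ is counted in $Q(\tau)$, $\tau=(\ell_1',r_1',\ldots,\ell_d',r_d')$, exactly when $\sigma\preceq\tau$. Indeed the aggregate time of $u_\sigma$ inside $Q(\tau)$ equals $|B(\sigma)\cap Q(\tau)|$, and $X_{i:-\ell_i}\in Q(\tau)\iff \ell_i\le\ell_i'$ while $X_{i:r_i}\in Q(\tau)\iff r_i\le r_i'$; hence this count equals $2d$ iff $\sigma\preceq\tau$ and is at most $2d-1<k$ otherwise.

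Let $S$ be any set of $M$ valid sequences (possible since $|\Q|={m'}^{2d}\ge M$ and $M\le n$, so at most $n$ users are ever needed). For each $b\in\{0,1\}^S$, let instance $I_b$ consist of the users $\{u_\sigma:\sigma\in S,\ b_\sigma=1\}$; by the characterization above, its answer to $Q(\tau)$ is the dominance (zeta) sum $n_{Q(\tau),k}=\sum_{\sigma\in S,\ \sigma\preceq\tau} b_\sigma$. The crux is recoverability: the whole collection of answers $\{n_{Q(\tau),k}\}_\tau$ must determine $b$. Here the construction pays off, because the zeta transform of the product poset $\{1,\ldots,m'\}^{2d}$ is an invertible linear map (its inverse being M\"obius inversion, i.e.\ iterated finite differences along the $2d$ coordinates), hence injective; its restriction to the coordinate subspace of vectors supported on $S$ is therefore also injective, so distinct $b$ yield distinct answer profiles.

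Finally, since the query algorithm only reads the stored structure, two instances with different answer profiles must produce different memory contents; with $2^M$ instances of pairwise distinct profiles, the structure needs at least $M=\min\{n,(m/2d)^{2d}\}$ bits. I expect the main obstacle to be recoverability in the regime $n<{m'}^{2d}$, where only $n$ of the ${m'}^{2d}$ gadgets are used: the tempting route is to pick mutually incomparable rectangles so that each query reads off a single bit, but the grid poset has no antichain of size $\Omega({m'}^{2d})$. The clean fix is the observation above — invertibility of the entire zeta transform already forces injectivity on every coordinate subspace — so an arbitrary choice of $M$ rectangles suffices.
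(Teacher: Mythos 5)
Your proposal is correct, and it is built on exactly the same construction as the paper's proof: the same point set $X$, the same query family $\Q$, the same gadget user placed with unit time on the $2d$ boundary points of a rectangle, the threshold $k=2d$, and the same key geometric fact that a gadget $u_\sigma$ is counted in $Q(\tau)$ iff $\sigma\preceq\tau$ componentwise (equivalently, iff $Q(\sigma)\subseteq Q(\tau)$). Where you diverge is the final distinguishability step. The paper gives an \emph{explicit sequential decoder}: it orders the rectangles of $\Q$ by the number of points in their interior (ties broken lexicographically), and recovers bit $B_i$ by querying $Q_i$ and subtracting the contributions of already-decoded users $u_{Q'}$ with $Q'\subseteq Q_i$, which are guaranteed to appear earlier in that order. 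You instead argue \emph{injectivity} of the map from bit vectors to answer profiles: the answers form the zeta transform of the indicator vector over the product poset $\{1,\ldots,m'\}^{2d}$, the zeta transform is invertible by M\"obius inversion (upper triangular with unit diagonal in any linear extension), and injectivity survives restriction to the coordinate subspace of vectors supported on your chosen set $S$ of $M$ rectangles. Both arguments are valid and yield the same bound; yours dispenses with the careful interior-point ordering and the inductive subtraction (and your closing remark correctly identifies why a naive antichain-based approach cannot work), at the cost of being non-constructive, while the paper's version exhibits an actual decoding procedure, which makes the information-theoretic step entirely elementary.
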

\begin{proof}

    We use the set $\Q$ defined above.
    Let $b=  \min\{n,|\Q|\}$.
    The main idea is to show that given a sequence of $b$ bits, we can use a data structure to encode the bits such that 
    every bit can be recovered (decoded) by issuing queries.
    This clearly shows that the data structure must consume at least $b$ bits of storage which proves the theorem. 
    Let $B_1, \ldots, B_b$ be an arbitrary sequence of bits. 
    
    We first order the rectangles in $\Q$ by the number of points contained in their interior, breaking ties lexicographically.
    In particular, consider two rectangles $Q(\sigma_1), Q(\sigma_2) \in \Q$ corresponding to two valid sequences $\sigma_1$ and $\sigma_2$. Then $Q(\sigma_1)$ appears before $Q(\sigma_2)$ if $Q(\sigma_1)$ has strictly fewer points than $Q(\sigma_2)$ in its
    interior, but if $Q(\sigma_1)$ and $Q(\sigma_2)$ have equal number of points in their interior, then $Q(\sigma_1)$ appears before $Q(\sigma_2)$ if $\sigma_1$ is lexicographically smaller than $\sigma_2$, and vice-versa otherwise.
    
    Consider the $i$-th rectangle $Q_i$ in $\Q$ in this ordering.
    If $B_i = 0$ then we do nothing. Otherwise, we define a unique user, $u_Q$, that spends exactly one unit of time in each of the
    regions $p_1, \ldots, p_{2d}$ which lie on the boundary of $Q_i$.
    Since $b = \min\{n,|\Q|\}$ this is well-defined. 
    This set of users and the points in $X$ are given as input to the data structure.

    It remains to show that we can decode all the bits using the data structure. 
    Now consider the query algorithm and assume that we would like to extract the value of bit $B_i$.
    Observe the sets $X$ and $\Q$ are fixed (do not depend on the sequence of bits).
    Thus, at the time of decoding both $X$ and $\Q$ can be computed, including the ordering of $\Q$ that was used during the encoding. 
    To extract $B_i$,  we consider the $i$-th rectangle $Q_i$ in $\Q$.
    We query the data structure with $Q_i$ and obtain the number $z$ of users who have spent at least $2d$ time in $Q_i$.
    Let us first consider the case when $Q_i$ contains exactly $2d$ points (including the boundary). 
    If $z=0$, then we can conclude $B_i=0$ since no user was added to spend 1 unit of time at the $2d$ points on the boundary of $Q_i$.
    Otherwise, if $z=1$, then we must have $B_i=1$ because we cannot have $2d$ instances of another user $u_{Q}$ be contained in $Q_i$. 
    Now consider the case when $Q_i$ contains additional points in its interior.
    In this case, we use the fact that we have extracted all the bits $B_1, \ldots, B_{i-1}$. 
    Consider a rectangle $Q' \in \Q$ such that $Q'$ is inside $Q_i$. 
    As we have the value of the bits $B_1, \ldots, B_{i-1}$, we know whether a user $u_{Q'}$ was added or not. 
    For every such rectangle $Q'$, we count the number of users that were added using the previously decoded bits.
    By subtracting this value from $z$, we can decide whether the user $u_{Q_i}$ was added or not which in turn allows us to decode $B_i$.
\end{proof}

\subsection{A Conditional Lower Bound from Fast-Query SetDisjointness}\label{sec:condlb1}

In \refthm{geolbuncon}, to create the worst-case lower bound of $\Omega(m^{2d})$ (we assume $d=O(1)$ hereon), we need to create an instance where
there are $\Omega(m^{2d})$ users. 
While the lower bound rules out getting efficient solutions for large values of $n$, it leaves the question of obtaining
efficient solutions for small values of $n$ open. 
Here, we show that in dimensions 2 and above, very efficient solutions are unlikely. Our source of hardness is the following conjecture. We will change the notation from $r$-SetDisjointness in Section 3 to $\beta$-SetDisjointness below, since $r$ is also used for regions.

\begin{conjecture}[\textit{Fast-Query-SetDisjointness Conjecture}:] Let $S_1,\cdots,S_a$ be a collection of $a$ sets from the universe
$\mathcal{U}$.
Let $\beta$ and $s$ be fixed integer constants. Then, there exists a constant $c$ (depending only on $s$ and $\beta$) such that the following holds: answering $\beta$-SetDisjointness queries when 
$|\mathcal{U}| \geq$ $\log^c a$, in time $O(\log^s a)$ requires $\Omega(a^{\beta-o(1)})$ space in the worst case.
\end{conjecture}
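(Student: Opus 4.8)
The final statement is a \emph{conjecture}---a conditional hardness assumption---so the plan is not to prove it unconditionally, which would require polynomial cell-probe space lower bounds far beyond current techniques, but to motivate it as a calibrated specialization of the Strong $r$-SetDisjointness Conjecture (Conjecture~1) to the polylogarithmic query-time and polylogarithmic universe regime, and to argue that the quantifier structure makes the claim neither vacuous nor refuted by the degeneracies already flagged after Conjecture~1. I would set $\beta = r$ throughout and keep the number of sets $a$ as the governing parameter.

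First I would specialize Conjecture~1, which asserts $ST^{r} = \tilde\Omega(N^{r})$ for $r$-SetDisjointness, to the fast-query regime $T = O(\log^{s} a)$. There $T^{r} = \operatorname{polylog}(a)$, so the tradeoff collapses to $S = \tilde\Omega(N^{r})$ with the polylog factor absorbed into $\tilde\Omega$. It then remains to restate this in terms of $a$ rather than the total input size $N$. Working in the regime where each set lives in a universe of size $|\mathcal U| = \operatorname{polylog}(a)$, I would use instances with $N = a\,|\mathcal U| = a^{1+o(1)}$, so that $N^{r} = a^{r + o(1)}$ and the hidden polylog is $a^{o(1)}$; the bound becomes $S = \Omega(a^{r-o(1)}) = \Omega(a^{\beta-o(1)})$, exactly the advertised form.

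The delicate step---and the one I expect to be the genuine obstacle---is reconciling this with the universe-size caveat the paper raises immediately after Conjecture~1: when $|\mathcal U|$ is tiny there is a trivial $N$-space solution answering a query by scanning the universe in time $O(r|\mathcal U|)$, and such a solution would defeat any polynomial space lower bound in a fast-query regime. I would neutralize this by choosing the constant $c = c(s,\beta)$ strictly larger than $s$: then the universe-scan solution has query time $\Theta(|\mathcal U|) = \Theta(\log^{c} a)$, which lies outside the permitted budget $O(\log^{s} a)$ and therefore does not qualify as a fast-query data structure, while the hypothesis $|\mathcal U| \ge \log^{c} a$ simultaneously keeps the universe large enough that Conjecture~1's hardness is not degenerated away. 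Verifying that one threshold $c$ can serve both roles at once is precisely the calibration that turns a heuristic reduction into a coherent statement; because that reconciliation is not fully airtight (Conjecture~1 quantifies only over $N$, not over the joint $(a,|\mathcal U|,T)$ regime), I would in the end pose the claim as an independent conjecture rather than a lemma conditional on Conjecture~1.

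Finally, to argue the exponent $\beta$ is the correct one and that the conjecture is tight, I would pair it with the matching upper bound of Goldstein et al.~\cite{goldstein2017conditional}: their large/small-set construction, set to a polylogarithmic query target, answers $\beta$-SetDisjointness queries in polylogarithmic time using $\tilde O(a^{\beta})$ space in exactly this regime. This shows $\Omega(a^{\beta-o(1)})$ cannot be pushed to a larger exponent, so the conjecture asserts the tight space requirement and is safe to use only as a hypothesis for the downstream Geometric-CLAV lower bounds in \refsec{condlb1} and \refsec{condlb2}.
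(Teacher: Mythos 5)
Your proposal correctly recognizes that the statement is a conjecture and can only be motivated, not proven, and your overall strategy---specialize the strong tradeoff conjecture to the polylogarithmic-query regime and calibrate the constant $c$ so that the small-universe degeneracy cannot refute the statement---is in the same spirit as what the paper does. However, the supporting evidence the paper actually gives is different and stronger in one important respect: rather than heuristically substituting $T = O(\log^s a)$ and $N = a\,|\mathcal U| = a^{1+o(1)}$ into Conjecture~1 (which, as you yourself concede, is not a valid inference because Conjecture~1 quantifies only over $N$), the paper anchors the conjecture to a \emph{proven} result of Goldstein et al.\ for $\beta = 2$: conditional on the Strong SetDisjointness conjecture, any data structure needs either $\Omega(a^{2-o(1)})$ space or query time $\Omega(|\mathcal U|^{0.5-o(1)})$. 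With that dichotomy in hand, choosing $c > 2s$ makes the query-time branch impossible under the hypothesis $T = O(\log^s a)$ and $|\mathcal U| \ge \log^c a$, so the space bound follows; the general-$\beta$ statement is then the natural extrapolation. This also exposes a concrete miscalibration in your write-up: your choice $c > s$ is tuned only to exclude the trivial universe-scan algorithm, whose query time is $\Theta(|\mathcal U|)$, but it is too weak to exclude the query-time branch of the known dichotomy, whose threshold is $|\mathcal U|^{0.5-o(1)}$; for that you need $c > 2s$ (up to the $o(1)$), which is exactly the constant the paper picks. So while nothing in your proposal is wrong as a piece of motivation, it misses the one piece of rigorous evidence (the $\beta=2$ implication) that the paper uses to justify positing the conjecture, and its constant would need to be tightened before that evidence could be invoked.
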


The above conjecture is not explicitly mentioned in its exact form above.
However, in~\cite{goldstein2019hardness} it is proven that for $\beta=2$ the above conjecture follows from the Strong SetDisjointness conjecture. 
In particular, it is proven that one either needs $\Omega(a^{2-o(1)})$ space or a query time of $\Omega(|\mathcal{U}|^{0.5-o(1)})$
and since in our formulation, we can simply choose $c>2s$,  we get a space lower bound~\cite{goldstein2019hardness}.

\vspace{3mm} We show that our construction in \refsec{geolbuncon} can be adapted to obtain a lower bound  assuming Conjecture 3. 
Consider an input to the SetDisjointness Problem, following the notation used in the above conjecture: 
Let $S_1, \ldots, S_a$ be $a$ sets from a universe  $\mathcal{U}$.
Let $\overline{S}_i = \mathcal{U} \setminus S_i$ be the complement of $S_i$.
We will use the same point set $X$ in \refsec{geolbuncon} but with some small modifications. 
First, we set $m=4da$ and thus $m' = 2a$.
We define the set $X_i^+ = \{ X_{i:s} \mid 1 \le s \le 2a\}$.
For $i\not = d$, we define $X_i^- = \{ X_{i:s} \mid -2a \le s \le -1\}$.
Finally, we define $X_d^- = \{X_{i:s} \mid -2da  \le s \le -1\}$.
We call each of subsets $X_i^+$ or $X_i^-$ a \textit{chunk}. Thus, we have $2d$ chunks and
we designate $X_d^-$ as the last chunk and the other chunks are ordered arbitrarily from 1 to $2d-1$.
Observe that the last chunk contains $2da$ points whereas all the other chunks contain $2a$ points. 
Let $X$ be the union of all these chunks and so $|X| = O(a)$.
Given a set $S \subset \mathcal{U}$, and a point $q \in X$, \textit{encoding $S$ at $q$} corresponds to the following operation:
for every index $i\in S$, we create a triplet where the user $u_i$ spends unit time at $q$. 
With the above definition, we build the following instance of $d$-dimensional Geometric-CLAV problem: for every $q \in X_d^-$, we encode $\mathcal{U}$ at $q$.
In other words, $\mathcal{U}$ is encoded at every point of the last chunk.

Next, for every  $q \in X \setminus X_d^-$, we do the following. Recall that by construction, only one coordinate of $q$ is non-zero. 
Let $i$ be the absolute value of this non-zero coordinate of $q$. If $i$ is odd, then we encode $S_{(i+1)/2}$ at $q$ (e.g., if $i=1$, we encode $S_1$).
If $i$ is even, then we encode $\overline{S}_{i/2}$ at $q$. 
Intuitively, this encoding can be described as follows:
For every chunk other than the last chunk, as we go over the points of the chunk away from the origin, 
$i$ increases from 1, 2, 3, 4, $\ldots$ to $2a$, and thus we encode
the sets $S_1, \overline{S}_1, S_2, \overline{S}_2, \ldots, \overline{S}_a$ in that order.
We set $k=2da+1$ and then store this instance in a data structure for Geometric CLAV queries.

Now consider a query to the SetDisjointness problem, given  by
the indices, $i_1, \ldots, i_{2d-1}$, of the sets we are to intersect. 
Observe that each set $S_i$ has been encoded at $2d-1$ points, at one point of every chunk except the last chunk.
We mark some of these points as follows.
For $S_{i_j}$, we consider the $j$-th chunk and its unique point $q_{i_j}$ such that $S_{i_j}$ is encoded at $q_{i_j}$.
Let $c_j$ be the magnitude of the non-zero coordinate of the point $q_{i_j}$. 
By our construction,  $c_j$ is odd and $i_j = \frac{c_j+1}{2}$.
Define $w = \sum_{j=1}^{2d-1}(i_j-1)$.
Observe that $0 \le w \le (2d-1)(a-1)$.
Finally, we mark the point in the last chunk whose non-zero coordinate has value $-(k-w-2d+1)$.
Observe that $1 \le k-w-2d+1 \le 2ad$ and thus such a point exists and thus it can be marked.
We create a query rectangle $Q$ which has all these marked points on its boundary.
Observe that for every $j$, $1 \le j \le 2d-1$, when we look at the $j$-th chunk, we have the encoding of the sets
$S_1, \overline{S}_1, \ldots, S_{i_j-1},\overline{S}_{i_j-1}$ inside $Q$ and thus every user spends exactly
$i_j-1$ time on these points. 
Thus, when we look at the points that are fully inside $Q$ and they lie on the chunks 1 to $2d-1$,
every user spends exactly a total of $w$ time.
In the last chunk, we have exactly $k-w-2d+1$ points in $Q$ (including the boundary) and thus every user
spends exactly $k-w-2d+1$ time.
Call the points that lie on the boundary of $Q$ and in the chunks $1$ to  $2d-1$ the \textit{important points}.
Thus, ignoring the important points, every user spends exactly $k-2d+1$ time in $Q$. 
Finally, observe that if $S_{i_1} \cap S_{i_2} \cap \ldots S_{i_{2d-1}}$ is non-empty, then at least one user spends
one time on each of the $2d-1$ important points, for an aggregate time of $k$ spent in $Q$. 
On the other hand if $S_{i_1} \cap S_{i_2} \cap \ldots S_{i_{2d-1}}$ is empty, then by construction no such user will exist.
This means that if the geometric CLAV query returns a non-zero answer, then we can conclude that the intersection of the sets is non-empty, and empty otherwise. 
Now using the Fast-Query-SetDisjointness Conjecture (Conjecture 3), we get the following result.
\begin{theorem}\label{thm:condlb1}
    Assume that Conjecture 3 holds. 
    Then answering $d$-dimensional geometric CLAV queries in polylogarithmic query time, requires
    $\Omega( m^{2d-1-o(1)})$ space, assuming $d$ is fixed.
\end{theorem}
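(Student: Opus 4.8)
The plan is to close the loop on the reduction built up in the paragraphs preceding the statement and then invoke Conjecture 3 with $\beta = 2d-1$. That construction already gives, for an arbitrary instance of $(2d-1)$-SetDisjointness on sets $S_1,\ldots,S_a \subseteq \mathcal{U}$, a $d$-dimensional Geometric-CLAV instance built on the point set $X$ of size $O(a)$ (a modification of the point set of \refsec{geolbuncon}) with threshold $k = 2da+1$, together with a rule that turns any query $(i_1,\ldots,i_{2d-1})$ into a single CLAV query rectangle $Q$. The correctness I would cite is exactly the one already argued: after accounting for the interior points, every user accumulates precisely $k-2d+1$ time inside $Q$, so a user reaches the threshold $k$ iff it belongs to all of $S_{i_1},\ldots,S_{i_{2d-1}}$; hence the CLAV count is nonzero iff $\bigcap_{j} S_{i_j} \neq \emptyset$.

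First I would pin down the parameter bookkeeping. Since $m = 4da$ and $d = O(1)$, we have $a = \Theta(m)$, so a space bound of the form $a^{2d-1-o(1)}$ is identical to $m^{2d-1-o(1)}$; and $\beta = 2d-1$ is a fixed constant, matching the constant-$\beta$ hypothesis of the conjecture. I would also observe that assembling $Q$ from the query indices costs only $O(d)=O(1)$ arithmetic (computing $w=\sum_j (i_j-1)$ and the last-chunk coordinate $-(k-w-2d+1)$), so the SetDisjointness query time equals the CLAV query time up to an additive constant.

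Next I would check the universe-size hypothesis of Conjecture 3, which asks that $|\mathcal{U}| \geq \log^c a$ for a constant $c$ depending on $\beta$ and $s$. Because the reduction is indifferent to the size of $\mathcal{U}$ (encoding a set at a point merely creates one unit-time triple per index of the set, and the number of users equals $|\mathcal{U}|$), I may restrict to hard SetDisjointness instances whose universe is at least polylogarithmic in $a$, which is exactly the regime the conjecture addresses.

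The final step is the contradiction. Suppose a Geometric-CLAV structure answered queries in $O(\log^s m) = O(\log^s a)$ time using $o(m^{2d-1-o(1)}) = o(a^{2d-1-o(1)})$ space. Composing it with the reduction would answer $(2d-1)$-SetDisjointness in $O(\log^s a)$ time and $o(a^{2d-1-o(1)})$ space, contradicting Conjecture 3; the theorem follows. I expect the only delicate points to be confirming that the per-query overhead of building $Q$ and reading off a single count stays inside the polylogarithmic budget (it does, as $d$ is constant), and being explicit that the quantity the conjecture bounds is the data structure's space, the $O(a)$ input itself being negligible against $a^{2d-1-o(1)}$.
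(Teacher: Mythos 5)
Your proposal is correct and follows essentially the same route as the paper: the paper's proof of Theorem~\ref{thm:condlb1} \emph{is} the reduction constructed in the preceding paragraphs of Section~\ref{sec:condlb1} (encoding $\mathcal{U}$ on the last chunk, the sets $S_i, \overline{S}_i$ on the other chunks, setting $k=2da+1$, and turning a $(2d-1)$-SetDisjointness query into a single rectangle query), followed by a direct invocation of Conjecture~3 with $\beta = 2d-1$ and $a = \Theta(m)$. Your additional bookkeeping --- the constant-time query translation, the polylogarithmic-universe hypothesis, and the explicit contradiction --- merely makes explicit what the paper leaves implicit, so there is no substantive difference.
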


\subsection{A Conditional Lower Bound for $d=2$ using the Boolean-Matrix-Multiplication Conjecture}\label{sec:condlb2}

Here, we show a different conditional lower bound for the planar version.
This result is slightly weaker than the lower bound presented in the previous section but it uses a different 
conjecture which gives additional evidence that obtaining a very efficient solution for the geometric
CLAV problem is unlikely for another reason. 
Our lower bound is inspired by the lower bound of \cite{kaplan2008efficient} but it uses a conjecture related to
boolean matrix multiplication (BMM)~\cite{subcubic}, defined below.

\begin{definition}
   Let $A$ and $B$ be two $m\times m$ matrices with entries in $\{0,1\}$. The BMM problem is to  compute another $m \times m$ matrix $C$
   such that 
   $
   C[i,j] = \texttt{OR}_{k=1}^m (A[i,k]\, \texttt{AND}\, B[k,j]).
   $
   It is essentially similar to normal matrix multiplication where $+$ and $\times$ operations are replaced by logical \texttt{OR}
   and \texttt{AND}.
\end{definition}

The BMM problem can be solved in $O(m^3)$ time with the standard matrix multiplication algorithm.
The main conjecture that we will use is that this solution cannot be improved significantly using any ``combinatorial'' algorithm.
The exact definition of a ``combinatorial'' algorithm is not universally agreed upon~\cite{bansal2009regularity}, but generally speaking,
it applies to algorithms that only use combinatorial structure and not the algebraic, Strassen-like methods for fast 
matrix-multiplication~\footnote{To quote Bansal and Williams~\cite{bansal2009regularity} ``[T]he term
is mainly just a way of distinguishing those approaches which are different from the algebraic approach
originating with Strassen [\ldots] we simply think of a combinatorial
algorithm as one that does not call an oracle for ring matrix multiplication.''}. 

\begin{conjecture}
    The BMM problem cannot be solved by a combinatorial algorithm in $O(m^{3-\varepsilon})$ time for any constant $\varepsilon > 0$~\cite{subcubic}.
\end{conjecture}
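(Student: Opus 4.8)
The final statement is a \emph{conjecture}, i.e.\ an unproven hardness hypothesis rather than a theorem awaiting a derivation; consequently there is no proof to sketch in the usual sense, and establishing it unconditionally would amount to a major breakthrough. Proving a super-linear---indeed super-quadratic---time lower bound for any explicit problem in a general model of computation is well beyond current techniques, so the honest ``plan'' is not to derive the statement but to marshal the evidence that justifies adopting it as a hardness assumption, and to argue that its failure would carry sweeping and implausible consequences. This is precisely why the results that build on it, such as \refthm{condlb2}, are stated \emph{conditionally}: they derive consequences \emph{from} the conjecture rather than proving it.

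First I would survey the algorithmic landscape to show that the cubic barrier has resisted every combinatorial attack. The textbook algorithm runs in $\BO{m^3}$ time; the Four Russians technique improves this to $\BO{m^3/\log^2 m}$, and a line of subsequent work (see \cite{bansal2009regularity} and the references therein) shaves additional polylogarithmic factors. Yet no combinatorial algorithm is known that reduces the \emph{exponent} below $3$. The only route that breaks the cubic exponent is the algebraic one, originating with Strassen and refined ever since, but it relies on cancellation and ring structure that the Boolean semiring does not provide; moreover, as the footnote in the excerpt stresses, such algebraic methods are exactly what the word ``combinatorial'' is meant to exclude---a combinatorial algorithm being one that does not invoke a ring-matrix-multiplication oracle. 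Establishing this part is a matter of citing and organizing known upper bounds, not of proving anything new.

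Second I would invoke the web of subcubic equivalences that clusters many a-priori unrelated problems at exactly this barrier. Under subcubic combinatorial reductions, BMM is equivalent to triangle detection and listing, to transitive closure, and to several graph and shortest-path problems, so a truly subcubic combinatorial algorithm for any one of them would yield one for all simultaneously. The fact that decades of effort on these independent problems have all stalled at the cubic threshold is the central circumstantial evidence, and it is why fine-grained complexity \cite{subcubic} treats the statement as a load-bearing assumption. The main obstacle, therefore, is fundamental rather than technical: there is no known avenue toward an unconditional proof, and so the statement must remain a conjecture whose credibility rests entirely on this accumulation of evidence. Accordingly, the appropriate use of the statement in this paper is as a hypothesis, and the substantive work lies not in proving it but in the reduction that transfers its hardness to the planar Geometric-CLAV problem.
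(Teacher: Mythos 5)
Your assessment is exactly right: the statement is a conjecture imported from the fine-grained complexity literature~\cite{subcubic}, and the paper does not (and could not) prove it --- it merely states it, clarifies the informal notion of a ``combinatorial'' algorithm via Bansal--Williams~\cite{bansal2009regularity}, and uses it as the hypothesis for the conditional lower bound of Theorem~\ref{thm:condlb2}. Your treatment --- declining to ``prove'' it, marshaling the standard evidence (failure of combinatorial attacks, subcubic equivalences), and identifying its role as a load-bearing assumption --- matches the paper's handling of the statement.
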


Now  we show that an efficient data structure for our problem can lead to improved combinatorial algorithms for BMM problem. Let $A$ and $B$ be the input to the BMM problem. 
We create $2m$ points as follows. 
Consider the line segment $\ell_0$ between points $(0,1)$ and $(1,2)$ and the line segment
$\ell_1$ between $(1,0)$ and $(2,1)$.
Place $m$ points on each line segment.
Let $p_1, \ldots p_m$ be the points on $\ell_0$ and $q_1, \ldots, q_m$ be the points on $\ell_1$.
These points form our set $X$.
The main property of this point is that for every point $p_i$ on $\ell_0$ and every point $q_j$ on $\ell_1$, there exists
a rectangle that contains $p_i$ and $q_j$ and no other point of $\ell_0$ or $\ell_1$.
We also create $m$ users $u_1, \ldots, u_m$.

Now consider matrix $A$. If $A[i,k]=1$, then $u_k$ spends unit time at point $p_i$.
If $B[k,j]=1$, then $u_k$ spends unit time at point $q_j$.
This creates an instance where there are at most $2m^2$ triplets.
Now, observe that for any two points $p_i$ and $q_j$, we can draw a rectangle that only contains
these two points. 
Observe that if any user $u_k$ spends 2 units of time in this rectangle, then, we must have $A[i,k] = B[k,j] = 1$ which 
in turn implies that $C[i,j]=1$.  If no such user exists, the $C[i,j]=0$. 
Thus, we can solve BMM problem using $m^2$ queries. 
Thus, we get the following lower bound.

\begin{theorem}\label{thm:condlb2}
    Assume that Conjecture 4 holds. Then any data structure that solves the Geometric-CLAV problem in two dimensions for an input of $m$ users, $m$ points and $O(m^2)$ triplets, must either have $\Omega(m^{1-\varepsilon})$ query time
    or have $\Omega(m^{3-\varepsilon})$ preprocessing time, for any $\varepsilon > 0$.
\end{theorem}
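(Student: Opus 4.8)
The plan is to reduce Boolean Matrix Multiplication to a batch of $m^2$ Geometric-CLAV queries on the point configuration already described, and then argue that a data structure beating both stated bounds would yield a subcubic combinatorial algorithm for BMM, contradicting Conjecture 4. Concretely, I would fix $k=2$, lay down the two slope-one segments $\ell_0$ and $\ell_1$ with $m$ points each, and encode the inputs $A$ and $B$ into at most $2m^2$ triplets via $A[i,k]=1 \Rightarrow u_k$ at $p_i$ and $B[k,j]=1 \Rightarrow u_k$ at $q_j$. This construction is purely combinatorial and runs in $O(m^2)$ time, comfortably below the cubic threshold, so it does not interfere with the contradiction.

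The crux is the geometric separation property together with per-query correctness, and the separation property is the step I expect to demand the most care. One must verify that for every pair $(p_i,q_j)$ there is an axis-aligned rectangle $R_{i,j}$ whose only points of $X$ are $p_i$ and $q_j$. Since $\ell_0$ sits in the upper-left and $\ell_1$ in the lower-right and both have slope one, I would take the left and top edges of $R_{i,j}$ from the $\ell_0$-neighbours of $p_i$ (pushing the left edge just past the preceding point and the top edge just below the following point) and take the right and bottom edges symmetrically from the $\ell_1$-neighbours of $q_j$; a short case check then shows that every other point of $\ell_0 \cup \ell_1$ is cut off by exactly one of these four edges. Granting this rectangle, a query with $k=2$ returns a nonzero count iff some user $u_k$ spends time at both $p_i$ and $q_j$, i.e. iff $A[i,k]=B[k,j]=1$ for some $k$, which is precisely the condition $C[i,j]=1$. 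Hence a single query recovers one entry of $C$, and issuing one query per pair $(i,j)$ computes the whole product with $m^2$ queries.

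Finally I would close the argument by contradiction. Fix $\varepsilon > 0$ and suppose a combinatorial data structure solves Geometric-CLAV with preprocessing time $O(m^{3-\varepsilon})$ and query time $O(m^{1-\varepsilon})$. Then the entire product $C$ is computed in $O(m^2)$ time for the construction, plus $O(m^{3-\varepsilon})$ for preprocessing, plus $m^2 \cdot O(m^{1-\varepsilon}) = O(m^{3-\varepsilon})$ for the queries, for a total of $O(m^{3-\varepsilon})$. Since every step is combinatorial, this is a subcubic combinatorial algorithm for BMM, contradicting Conjecture 4. Therefore, for every $\varepsilon > 0$, at least one of the two bounds must hold, which is exactly the claimed either/or lower bound. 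The only genuinely geometric ingredient is the separation property; the remainder is routine bookkeeping of the three time contributions and a check that each step preserves combinatoriality.
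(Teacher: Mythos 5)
Your proposal is correct and takes essentially the same route as the paper's proof: the same two slope-one segments $\ell_0,\ell_1$ with $m$ points each, the same encoding of $A$ and $B$ into at most $2m^2$ unit-time triplets, the same $m^2$ pair-isolating rectangle queries with $k=2$, and the same subcubic-contradiction bookkeeping against Conjecture 4. The only difference is that you spell out the construction of the separating rectangles, which the paper asserts as the ``main property'' of the point set without detail.
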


\subsection{Upper Bounds for Geometric-CLAV}\label{sec:ubgeometric}

We first address the tightness of our unconditional lower bound, assuming $n$ is larger than $(m/2d)^{2d}$.

\subsubsection{The Tabulation Solution}\label{sec:geomtabulation}

We observe that the problem can be solved with a look-up table using tabulation. 

\begin{theorem} \label{thm:tabulation}
    The geometric CLAV can be solved with 
    $O(m^{2d})$ space and $O(\log m)$ query time.
\end{theorem}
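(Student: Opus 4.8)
The plan is to exploit the fact that, although there are infinitely many candidate query hyperrectangles, the answer $n_{Q,k}$ depends only on \emph{which} input points of $X$ fall inside $Q$: the times $\tau_{i,x}$ are fixed, so two rectangles containing exactly the same subset of points necessarily yield the same count. The number of combinatorially distinct such subsets is only polynomial in $m$, so I would precompute the answer for every combinatorial type and store it in a lookup table that can be addressed in $O(\log m)$ time.

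First I would make the combinatorial equivalence precise. For each coordinate direction $i$, sort the (at most $m$) distinct $i$-th coordinates of the points as $c_{i,1} < c_{i,2} < \cdots$. A query interval $[a_i,b_i]$ selects exactly the points whose $i$-th coordinate lies in a contiguous range $\{\ell_i,\ell_i+1,\ldots,r_i\}$ of sorted indices, and there are only $O(m^2)$ such ranges (one pair of endpoints, plus the empty range). Hence each query rectangle $Q=\prod_{i=1}^{d}[a_i,b_i]$ is equivalent, as far as its contained point set and therefore $n_{Q,k}$ are concerned, to one of at most $O(m^{2d})$ combinatorial types, indexed by the $2d$-tuple $(\ell_1,r_1,\ldots,\ell_d,r_d)$.

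Next I would precompute, for each of these $O(m^{2d})$ types, the corresponding value $n_{Q,k}$ and store them in a $2d$-dimensional array. Each stored count is at most $n$ and hence fits in a single word (since $w=\Omega(\log(n+m))$), so the table occupies $O(m^{2d})$ words; adding the $O(dm)=O(m)$ words for the sorted coordinate lists keeps the total at $O(m^{2d})$. At query time, given an arbitrary $Q$, I would recover its combinatorial type by binary searching, in each dimension, for the indices $\ell_i$ and $r_i$ bracketing the boundaries $a_i,b_i$. Each search costs $O(\log m)$, so all $2d$ of them cost $O(d\log m)$; combining the resulting indices into a flat array offset and reading the answer costs a further $O(d)$ time. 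Since $d=O(1)$, the overall query time is $O(\log m)$, as claimed.

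I do not expect a genuine obstacle here, as this is a standard tabulation argument; the only points requiring care are verifying that equal contained-point-sets force equal answers (immediate from the fixed times) and checking that the per-dimension contribution is $O(m^2)$ rather than $O(m)$, since both the lower and upper endpoints of the selected index range must be specified. Note the theorem asserts only space and query time, so I would not need to account for preprocessing cost, which is in any case dominated by filling the $O(m^{2d})$ table entries.
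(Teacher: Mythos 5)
Your proposal is correct and matches the paper's proof essentially verbatim: both perform a rank-space reduction (replacing coordinates by sorted ranks), tabulate the answer for all $O(m^{2d})$ combinatorially distinct rectangles, and answer a query by binary searching for the rank of each boundary coordinate in $O(\log m)$ time followed by a table lookup.
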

\begin{proof}
    We use the classical rank-space reduction: for every dimension $i$, we replace the $i$-th coordinate of every point in $R$ with its rank.
    This reduces the problem to the case when the points in $R$ have coordinates between $1$ and $m$. 
    Observe that in this case, there are at most $m^{2d}$ combinatorially  different queries and thus, for every query we can simply store the 
    result in a data structure of size $O(m^{2d})$. Given the query rectangle, we simply find the rank of the coordinates of the rectangle in $O(\log m)$ time
    and simply look up the answer. 
\end{proof}

The simple data structure above shows that our unconditional lower bound in Section~\ref{sec:geolbuncon} is tight for ``large'' $n$. We now aim for the case when $n$ is ``small''. 

\subsubsection{An Efficient Data Structure Using Colored Dominance Counting}\label{sec:geom1dim}

We start with the simplest case when $d=1$, that is, all points (locations) in the Geometric-CLAV input are located on the real line. As before, we do a reduction to rank space by simply sorting the points in $R$. 
Thus, we can assume that the points in $R$ are the points $1$ to $m$. 
For an interval $\ell$, let  $\tau_{u,\ell} = \sum_{x \in \ell, \tau_{u,x}\in T} \tau_{u,x}$ denote the total time user $u$ spends in interval $\ell$.  
A \textit{minimal interval} for $u$ is an interval $[l,r]$ such that 
$\tau_{u,[l,r]} \geq k$, 
$\tau_{u,(l,r]} < k $,  and $\tau_{u,[l,r)} < k$.

Let $M_u$ denote the set of all the minimal intervals for user $u$. The next lemma bounds the number of minimal intervals.

\begin{lemma}
    The total number of minimal intervals is at most $N$.
\end{lemma}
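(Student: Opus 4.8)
The plan is to bound the number of minimal intervals contributed by each individual user and then sum over all users, exploiting the fact that the total number of (user, point) incidences is exactly $N$. First I would observe that both endpoints of any minimal interval $[l,r]$ for a user $u$ must be \emph{active} points of $u$, i.e.\ coordinates where $u$ has spent strictly positive time. Indeed, if $\tau_{u,l}=0$ then $\tau_{u,[l,r]}=\tau_{u,(l,r]}$, which contradicts the defining requirement $\tau_{u,[l,r]}\geq k > \tau_{u,(l,r]}$; the symmetric argument (comparing $[l,r]$ with $[l,r)$) forces $r$ to be active as well.

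Next, fix a user $u$ and let $p_1 < p_2 < \cdots < p_{t_u}$ be the active points of $u$, where $t_u$ is the number of triples in $T$ involving $u$, so that $\sum_u t_u = N$. For a fixed left endpoint $l=p_i$, consider the prefix sums $S_i(j) = \tau_{u,[p_i,p_j]} = \sum_{h=i}^{j}\tau_{u,p_h}$. Since every $\tau_{u,p_h}>0$, the sequence $S_i(j)$ is \emph{strictly} increasing in $j$. The minimality conditions $\tau_{u,[l,r]}\geq k$ and $\tau_{u,[l,r)}<k$ translate into $S_i(j)\geq k$ and $S_i(j-1)<k$, which pin down $j$ as the unique smallest index with $S_i(j)\geq k$ (if any such index exists). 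The remaining condition $\tau_{u,(l,r]}<k$ can only filter this candidate out further. Hence each active point $p_i$ is the left endpoint of \emph{at most one} minimal interval, and therefore $|M_u|\leq t_u$.

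Summing over all users yields $\sum_u |M_u| \leq \sum_u t_u = N$, which is exactly the claimed bound. The only mildly delicate points are verifying that endpoints must be active and that the strict monotonicity of the prefix sums (guaranteed by $\tau>0$) forces uniqueness of the right endpoint for each fixed left endpoint; once these are in place the statement follows by a direct incidence-counting argument, so I do not anticipate a genuine obstacle.
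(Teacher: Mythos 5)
Your proof is correct and follows essentially the same route as the paper's: charge each minimal interval of a user $u$ to its left endpoint, which must be a point where $u$ spent strictly positive time, and observe that distinct minimal intervals of $u$ have distinct left endpoints, so summing over users gives the bound $N$. The only difference is that you spell out the two facts the paper merely asserts (activeness of the starting point via comparing $[l,r]$ with $(l,r]$, and uniqueness of the right endpoint via strict monotonicity of the prefix sums), which is a welcome elaboration rather than a different argument.
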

\begin{proof}
    For each user $u$, the minimal intervals in $M_u$  can overlap but the intervals in $M_u$ will have distinct starting points 
    and in addition, at each starting point, the user $u$ must have spent strictly more than 0 time. 
    Thus, the total number of minimal intervals is upper bounded by $N$.
\end{proof}

We will now ``lift'' the 1D input to two dimensions. Given two points $p=(p_x,p_y)$ and $q=(q_x,q_y)$ in $\mathbb{R}^{2}$, we say that $p$ dominates $q$ if $p_x \geq q_x$ and $p_y \geq q_y$. If $p \neq q$ one of these inequalities must be strict. We now map an interval $\ell = [a,b]$ to the point $(-a,b)$ in the plane. This ensures that
if an interval $\ell'=[a',b']$ contains $\ell$, then the mapped point $(-a',b')$ dominates the point $(-a,b)$.
Consider now the following problem, called the \textit{colored dominance counting} in 2D.

\begin{definition}[2D Colored Dominance Counting]
   Consider a set $S$ of $N$ points such that each point $p$ is assigned a color $c(p)$ from $[n]$.
   The goal is to store $S$ in a data structure such that given a query point $q=(q_x,q_y)$, one can find the number
   of colors dominated by $q$ or to be more precise, we need to return
   \[
       c(q) = | \{ c(p) \mid p=(a,b) \in S,  a \le q_x , b \le q_y \}.
   \]
\end{definition}

It is known that this problem can be solved with $O(N)$ space and $O(\log_w (n_{Q,k}))$ query time~\cite{gao2023adaptive} where
$w$ is the word size and $n_{Q,k}$ is the answer returned.
Thus, we get the following.
\begin{theorem}\label{thm:1d}
    The geometric-CLAV problem can be solved in 1D with 
    $O\left( \min\{N,m^2\} \right)$ space and with query time $O(\log_w (n_{Q,k}))$  where $w$ is the word size 
    and $n_{Q,k}$ is the answer to the query. 
\end{theorem}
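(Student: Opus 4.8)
The plan is to reduce one-dimensional Geometric-CLAV directly to 2D colored dominance counting, using the minimal-interval machinery already set up. The crux is the equivalence: for a query interval $Q=[a,b]$, a user $u$ satisfies $\tau_{u,[a,b]}\geq k$ if and only if $u$ has some minimal interval $[l,r]\in M_u$ with $[l,r]\subseteq[a,b]$. Granting this, I would map every minimal interval $[l,r]\in M_u$ to the planar point $(-l,r)$ colored with $u$, obtaining — by the preceding lemma — a set of at most $N$ colored points; the query $[a,b]$ becomes the dominance query point $(-a,b)$, and the number of distinct colors dominated by it is exactly $n_{Q,k}$.

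To prove the equivalence, the easy direction is that a minimal interval $[l,r]\subseteq[a,b]$ forces $\tau_{u,[a,b]}\geq \tau_{u,[l,r]}\geq k$ since all times are non-negative. The harder direction — the one I expect to be the main obstacle — is a shrinking argument. Starting from $[a,b]$ with $\tau_{u,[a,b]}\geq k$, I would first push the left endpoint as far right as possible, to the largest $l^*$ with $\tau_{u,[l^*,b]}\geq k$; maximality gives $\tau_{u,(l^*,b]}<k$ and forces $u$ to spend positive time at $l^*$. I would then pull the right endpoint as far left as possible, to the smallest $r^*$ with $\tau_{u,[l^*,r^*]}\geq k$; this yields $\tau_{u,[l^*,r^*)}<k$, and since $\tau_{u,(l^*,r^*]}\leq\tau_{u,(l^*,b]}<k$ the left condition survives the shrink. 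Hence $[l^*,r^*]$ is a minimal interval contained in $[a,b]$, as required. The care needed here is to verify, in the integer rank space, that removing a single boundary point corresponds exactly to the $(l,r]$ and $[l,r)$ quantities in the definition, and that neither shrink ever drops the aggregate below $k$.

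With the equivalence in hand, the dominance direction agrees with the paper's convention: the point of a contained (hence smaller) interval is dominated by the query point $(-a,b)$, so counting colors dominated by $(-a,b)$ counts precisely the distinct users with a minimal interval inside $[a,b]$, that is $n_{Q,k}$. Feeding the $\leq N$ colored points to the structure of \cite{gao2023adaptive} then gives $O(N)$ space and $O(\log_w n_{Q,k})$ query time, where counting distinct colors rather than points is exactly what guarantees each qualifying user is tallied once regardless of how many of its minimal intervals fall in the query.

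Finally, for the space bound $O(\min\{N,m^2\})$, I would observe that after the rank-space reduction all interval endpoints lie in $\{1,\dots,m\}$, so there are only $O(m^2)$ combinatorially distinct queries; when $N$ exceeds $m^2$ we simply fall back to the tabulation data structure of Theorem~\ref{thm:tabulation} specialized to $d=1$, which uses $O(m^2)$ space. Taking the cheaper of the two structures yields the claimed $O(\min\{N,m^2\})$ space, with the $O(\log_w n_{Q,k})$ query time attained in the regime where the colored dominance structure is used.
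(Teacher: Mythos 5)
Your proposal is correct and follows essentially the same route as the paper: minimal intervals, the lift $[l,r]\mapsto(-l,r)$, a query mapped to $(-a,b)$, colored dominance counting via \cite{gao2023adaptive}, and tabulation for the $m^2$ term. You in fact make explicit two steps the paper leaves implicit---the shrinking argument showing a user qualifies iff one of its minimal intervals lies inside the query, and the origin of the $\min\{N,m^2\}$ space bound---and both arguments are sound.
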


\subsubsection{Higher Dimensions}\label{sec:geomhigherdim}

\begin{theorem}\label{thm:geometric}
    The geometric CLAV can be solved in $d$ dimensions with 
    $O\left( \min\{Nm^{2d-2},m^{2d}\} \right)$ space and with  query time $O(\log_w (n_{Q,k}))$ where $w$ is the word size 
    and $n_{Q,k}$ is the answer to the query. 
\end{theorem}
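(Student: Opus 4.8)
The plan is to reduce the $d$-dimensional problem to the one-dimensional structure of \refthm{1d} by \emph{materializing}, rather than navigating, the first $d-1$ dimensions. After the usual rank-space reduction in every coordinate, every query hyperrectangle has the form $[a_1,b_1]\times\cdots\times[a_d,b_d]$ with integer ranks in $[m]$. Its projection onto the first $d-1$ coordinates is a box $B=[a_1,b_1]\times\cdots\times[a_{d-1},b_{d-1}]$, and there are only $\BO{m^{2(d-1)}}=\BO{m^{2d-2}}$ combinatorially distinct such boxes. For each box $B$ I build one instance of the 1D data structure of \refthm{1d} on the last coordinate, restricted to the points whose first $d-1$ coordinates lie inside $B$ (keeping each such point's time and user label). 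A query then projects its first $d-1$ coordinates to identify the box $B$, looks up the corresponding 1D structure, and issues a single 1D query with the interval $[a_d,b_d]$; since the 1D instance for $B$ aggregates times over exactly the points with first $d-1$ coordinates in $B$ and last coordinate in $[a_d,b_d]$, i.e.\ exactly over $Q$, the returned count equals $n_{Q,k}$.

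The key insight, and the reason the query time carries no extra $\log^{d-1}m$ factor, is that we do \emph{not} build a range tree over the first $d-1$ dimensions (which would decompose the query into $\BO{\log^{d-1}m}$ canonical pieces and force us to re-aggregate a single user's time across those pieces). Instead we pay in space to precompute one standalone 1D structure per box, so that each query resolves to a single colored-dominance-counting query. The query time is therefore that of one 1D query, $\BO{\log_w n_{Q,k}}$ (the box lookup being $\BO{1}$ by indexing an array with the $2(d-1)$ ranks, and the rank reduction handled as in the one-dimensional case).

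For the space bound, let $N_B$ denote the number of minimal intervals of the 1D instance associated with box $B$; by \refthm{1d} this instance uses $\BO{\min\{N_B,m^2\}}$ space. A minimal interval is determined by a (user, starting point) pair at which the user spends positive time, so $N_B$ is at most the number of triplets whose point lies in $B$. I then bound the total space $\sum_B \BO{\min\{N_B,m^2\}}$ in two ways. Capping each term by $m^2$ over all $\BO{m^{2d-2}}$ boxes gives $\BO{m^{2d}}$. Summing $N_B$ and exchanging the order of summation gives $\sum_B N_B \le \sum_{\text{triplet }t}\,\#\{B : t\in B\}$; a fixed point lies in $\BO{m^2}$ intervals per coordinate, hence in $\BO{m^{2(d-1)}}$ boxes, so $\sum_B N_B=\BO{Nm^{2d-2}}$. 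Since $\sum_B \min\{N_B,m^2\}$ is at most both sums, the total space is $\BO{\min\{Nm^{2d-2},m^{2d}\}}$, as claimed.

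The main obstacle is conceptual rather than computational: handling the aggregation of a single user's time across the first $d-1$ dimensions without incurring the log-factor overhead, or the double counting, of a standard multi-level range tree. Materializing all $(d-1)$-dimensional boxes removes this obstacle at the cost of the stated space. The only point requiring care in the analysis is the two-sided bound on $\sum_B \min\{N_B,m^2\}$: one must verify both that each triplet participates in only $\BO{m^{2(d-1)}}$ box instances (yielding the $Nm^{2d-2}$ branch) and that the per-box cap of $m^2$ yields the $m^{2d}$ branch, so that the per-box minimum aggregates to $\min\{Nm^{2d-2},m^{2d}\}$ over the whole structure.
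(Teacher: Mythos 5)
Your proposal is correct and follows essentially the same route as the paper: tabulate over all $O(m^{2d-2})$ combinatorially distinct choices of the query's boundaries in $d-1$ of the dimensions, build one instance of the 1D structure of Theorem~\ref{thm:1d} for each choice, and answer a query by locating the right instance and issuing a single 1D query on the remaining dimension. Your space accounting, bounding $\sum_B \min\{N_B,m^2\}$ both by $\sum_B N_B = O(Nm^{2d-2})$ (via counting how many boxes contain each triplet) and by $O(m^{2d})$, is in fact slightly more refined than the paper's, which simply charges each of the $m^{2d-2}$ structures $O(\min\{N,m^2\})$ space, but both yield the stated bound.
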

\begin{proof}
    Let $Q$ be a query rectangle. Note that w.l.o.g, we can assume every boundary of $Q$ passes through a point in $X$
    (otherwise, we can shrink $Q$ without changing the number of points in $Q$). 
    We  tabulate over dimensions 2 to $d$: for each dimension between 2 and $d$, we consider every possible 
    coordinate of the two boundaries of $Q$. 
    Note that for every dimension we have at most $m^2$ for a total of $m^{2d-2}$ choices. 
    For every choice, we build a different data structure. 
    Thus, given $Q$, we  look at the coordinates of $Q$ along dimensions $2$ to $d$ and find the data structure built for $Q$.
    This means that we can simply project $Q$ into the first dimension and use the 1D data structure of \refthm{1d}.
\end{proof}

The answer $n_{Q,k}$ to the query takes at least $(\log n_{Q,k})/w$ words to be stored, and so the runtime is a factor $w/\log w$ from the optimal time. 
Note that if a user and location ID have to fit in a word, we have that $w$ is at least $\log (n+m)$. $n_{Q,k}$ is at most $n$, so in the worst case the run time is $O(\log n /\log \log (n+m))$.
Our conditional lower bounds show that the blow up in our data structure is almost optimal and it cannot be fully avoided.
Our lower bounds build an instance with $N=O(m)$ with still polylogarithmic number of users and it implies a
conditional lower bound of $\Omega(m^{2d-1-o(1)})$ space.
Consequently, the space consumption of the above theorem cannot be improved to $O(N m^{2d-2-\varepsilon})$ for
any constant $\varepsilon > 0$.

Finally, although the above exact data structure is constructed for a fixed $k$, we can build copies for different values of $k$ to get an approximate data structure for all $k$. That is, we store the data structure for each $k' = 2^i$ for $i=1, 2, \ldots, \log K$, where $K$ is the maximum allowed value. This returns the number of users who spent at least $k/2$ time in the query region.

\makeatletter
\if@ACM@anonymous
  
\else
    \subsection*{Acknowledgment} 
This material is based upon work performed while attending AlgoPARC Workshop on Parallel Algorithms and Data Structures at the University of Hawaii at Manoa, in part supported by the National Science Foundation under Grant No. 2452276.
This work was supported in part by: MUR
PRIN 20174LF3T8 AHeAD project; MUR PNRR CN00000013
National Center for HPC, Big Data and Quantum Computing; Marsden Fund (MFP-UOA2226); Danish Research Council (DFF-8021-002498).
F. Silvestri would like to thank B. Zamengo of Motion Analytica for useful discussions on mobility data that inspired this work. 
\fi
\makeatother

\bibliographystyle{alpha}
\bibliography{biblio}

\end{document}